\newtheorem{theorem}{Theorem}[section]
\newtheorem{lemma}[theorem]{Lemma}
\theoremstyle{definition}
\theoremstyle{remark}
\newcommand{\expec}[1]{ \textup{E}\left [ #1 \right ]}
\newcommand{\rounddown}[1]{\left \lfloor  #1 \right \rfloor}
\newcommand{\roundup}[1]{\left \lceil  #1 \right \rceil}
\DeclareMathOperator{\polylog}{polylog}
\DeclareMathOperator{\poly}{poly}
\DeclareMathOperator{\argmax}{{argmax}}
\DeclareMathOperator{\Bern}{Bern}
\DeclareMathOperator{\Bin}{Bin}
\newcommand{\tO}{\tilde{O}}
\newcommand{\id}{\textsc{id}}
\newcommand{\prob}[1]{\Pr\left[{#1}\right]}
\newcommand{\EE}{\mathbb{E}}
\newcommand{\FF}{\mathbb{F}}
\newcommand{\RR}{\mathbb{R}}
\newcommand{\cO}{{\mathcal O}}
\newcommand{\etal}{{et al.~}}
\newcommand{\hatm}{\hat{m}}
\newcommand{\cC}{\mathcal{C}}
\newcommand{\cF}{\mathcal{F}}
\newcommand{\cG}{\mathcal{G}}
\newcommand{\cI}{\mathcal{I}}
\newcommand{\cR}{\mathcal{R}}
\newcommand{\cS}{\mathcal{S}}
\newcommand{\eat}[1]{}
\renewcommand{\epsilon}{\varepsilon}
\renewcommand{\emptyset}{\varnothing}
\newcommand{\MaxCvg}{\textsc{MaxCov}\xspace}
\newcommand{\SetC}{\textsc{SetCov}\xspace}
\DeclareMathOperator{\opt}{OPT}
\newtheorem{result}{Result}
\begin{document}

\title{Improved Algorithms for Maximum Coverage\\ in Dynamic and Random Order Streams} 

\author{%
Amit Chakrabarti%
\thanks{Department of Computer Science, Dartmouth College, Hanover, NH, USA. Supported in part by NSF under award 2006589.}
\and
Andrew McGregor%
\thanks{College of Information and Computer Sciences, University of Massachusetts, Amherst, MA, USA}
\and
Anthony Wirth%
\thanks{School of Computing and Information Systems, The University of Melbourne, Victoria, Australia. Supported in part by the Faculty of Engineering and Information Technology.}
}

\date{}

\maketitle

\begin{abstract}

The maximum coverage problem is to select $k$ sets from a collection of sets such that
the cardinality of the union of the selected sets is maximized. We consider
$(1-1/e-\epsilon)$-approximation algorithms for this NP-hard problem in three
standard data stream models. 
\begin{enumerate}
    \item {\em Dynamic Model.} The stream consists of a
    sequence of sets being inserted and deleted. Our multi-pass algorithm uses
    $\epsilon^{-2} k \cdot \polylog(n,m)$ space. The best  previous
    result (Assadi and Khanna, SODA 2018) used $(n +\epsilon^{-4} k)
    \polylog(n,m)$ space. While both algorithms use $O(\epsilon^{-1} \log n)$
    passes, our analysis shows that when $\epsilon$ is a constant, it is possible to reduce the
    number of passes by a $1/\log \log n$ factor
    without incurring additional space.

    \item {\em Random Order Model.} In this model, there are no deletions and the sets forming the instance are uniformly randomly permuted to form the input stream. We show that a single
    pass and $k \polylog(n,m)$ space suffices for arbitrary small
    constant $\epsilon$.  The best previous result, by Warneke et al.~(ESA
    2023), used $k^2 \polylog(n,m)$ space.

    \item {\em Insert-Only Model.} Lastly, our results, along with numerous
    previous results, use a sub-sampling technique introduced by McGregor and
    Vu (ICDT 2017) to sparsify the input instance. We explain how this
    technique and others used in the paper can be implemented such that the
    amortized update time of our algorithm is polylogarithmic. This also
    implies an improvement of the state-of-the-art insert only algorithms in
    terms of the update time: $\polylog(m,n)$ update time suffices whereas the
    best previous result by Jaud et al.~(SEA 2023) required update time that
    was linear in $k$. 
\end{enumerate}
\end{abstract}

\section{Introduction}
\label{sec:intro}
\paragraph{Background.} The input to the \emph{maximum coverage problem} is an
integer $k$ and a collection of $m$ sets $S_1, \ldots, S_m$, each a subset of
the \emph{universe} $[n] := \{1,\ldots,n\}$. The goal is to find $k$ sets
$\{S_i\}$ whose union has maximum cardinality. This longstanding problem has
several applications, including facility and sensor allocation
\cite{KrauseG07}, circuit layout and job
scheduling~\cite{hochbaum_analysis_1998}, information
retrieval~\cite{anagnostopoulos_stochastic_2015}, influence maximization in
marketing strategy design \cite{KempeKT15}, and content
recommendation~\cite{SahaG09}.  In terms of theoretical importance,
it is perhaps simplest example of a submodular maximization problem, a rich
family of problems in machine learning~\cite{krause_golovin_2014}. A natural
variant of the problem is the \emph{set cover problem} and was one of Karp's
original~21 NP-complete problems~\cite{karp1972reducibility}. A greedy
approach that iteratively chooses the set with the greatest contribution
yields asymptotically optimal approximation algorithms for set cover and
maximum coverage~\cite{Feige98}.  Indeed, the greedy algorithm in practice
typically outperforms its proven approximation factor for many practical
problem instances~\cite{lim2014lazy}.

\paragraph{Data Stream Computation.} Over the last decade, there has been a growing body of work that addresses the challenges of solving the maximum coverage problem (henceforth, \MaxCvg), the related set cover problem (\SetC), and the general problem of submodular optimization, in computational models such as the data stream model, that are relevant when the underlying data sets are massive  \cite{SahaG09, McGregorV19, BateniEM17, indyk_tight_2019,AssadiK18,WarnekeCW23,AusielloBGLP12,YuY13,LiuRVZ21,AgrawalSS19,Norouzi-FardTMZ18,CormodeKW10,JaudWC23,Har-PeledIMV16,ChakrabartiW16,EmekR14,AssadiKL16}. Assadi, Khanna and Li~\cite{AssadiKL16} have a comprehensive summary of results and discussion.
In many real-world applications, even the conceptually simple greedy approach is not practical. This can occur if the underlying data is too large to be stored in a single location and is therefore distributed across multiple machines; or if the data changes over time, e.g., sets are added or removed from the input collection.

This work focuses on the set streaming model~\cite{SahaG09}, where the input stream is a sequence of sets (in the dynamic case, a sequence of set insertions and deletions, arbitrarily intermixed). Each set is described contiguously. The goal is to solve \MaxCvg in space that is sublinear in the size of the input.
We note in passing that there has also been interest in an alternative ``edge streaming'' model~\cite{Khanna0A23,BateniEM17,indyk_tight_2019}, but it is not the focus of this paper.

Along with the set-streaming model itself, Saha and Getoor~\cite{SahaG09} introduced the first algorithm for \MaxCvg in this model. Their algorithm, \textsc{sops},\footnote{
    The algorithm received this name in a later work by Yu and Yuan~\cite{YuY13}
} achieves an approximation factor of $1/4$ using $O(nk)$ space.
The input could be as large as~$\Omega(m n)$, and there is, e.g., a logarithmic-factor approximation for set cover that involves space~$\tilde{O}(mn^\delta)$ in~$O(1/\delta)$ passes~\cite{Har-PeledIMV16}.
To accord with practice, early approaches for set cover took space linear in~$n$~\cite{CormodeKW10,EmekR14,ChakrabartiW16}.
Though it is unclear whether there are sensible~$o(n)$-space solutions for Set Cover, McGregor and Vu~\cite{McGregorV19} pioneered the quest for~$\tilde{O}(k)$-space methods for \MaxCvg.
Storing even a single set from the stream may, in general, require~$\Theta(n)$ space. To avoid this, various papers consider sub-sampling the elements in the sets \cite{JaudWC23,DemaineIMV14,BateniEM17,Har-PeledIMV16}.
 In particular, McGregor and Vu~\cite{McGregorV19} observed that subsampling the instance to a universe of size~$\tilde{O}(k)$ with a hash function of high \emph{degree} preserves coverage very accurately.

We will focus on three specific set streaming models. In the \emph{insert-only set streaming model} the data stream comprises $m$ sets, in  arbitrary order, each described once. In the \emph{random order set streaming model}, we assume that these sets, each described once and contiguously, are ordered uniformly at random. This model captures a natural form of average-case analysis: the sets are chosen adversarially, but we can choose a random ordering to process them~\cite{AssadiK18,
WarnekeCW23}.
In the \emph{dynamic set streaming model}, each set may be added and deleted several times: the resulting \MaxCvg instance involves only those sets that remain added at the end of the stream.

\paragraph{Submodular maximization.}
Our problem \MaxCvg is a special case of \emph{submodular maximization},
specifically, of the problem of maximizing a monotone submodular function
under a cardinality constraint. In the streaming
setting~\cite{badanidiyuru_streaming_2014, feldman_one-way_2020}, we have
oracle access to a submodular function~$f \colon 2^U \to \RR$, for some
universe $U$ and the input stream is a sequence of elements of $U$. These
elements correspond to sets $S_i$ in \MaxCvg. An important point here is that 
in \MaxCvg (unlike in submodular maximization), storing an element of $U$
(i.e., a set $S_i$) takes non-constant space, as does storing the $f$-value
(coverage) of a putative solution take space.

The \textsc{sieve-streaming} algorithm~\cite{badanidiyuru_streaming_2014} for
submodular maximization gives a $(1/2 - \varepsilon)$-approximation for any
$\varepsilon > 0$. By adapting to maximum coverage, there is an
$\tilde{O}(n\varepsilon^{-1})$-space algorithm that achieves the same
approximation factor~\cite{mcgregor_better_2018}.  Subsequently, Chekuri \etal
\cite{ChekuriGQ15} extended this work to non-monotone submodular function
maximization under constraints beyond cardinality.  Observe that
$\Omega(mk^{-3})$ space is required to achieve an approximation greater than
$1/2 + o(1)$ in the arbitrary-arrival model~\cite{feldman_one-way_2020}.
Needless to say, there is progress on submodular maximization in dynamic
streams~\cite{NEURIPS2020_6fbd841e}, though comparatively little about
\MaxCvg.

\paragraph{Random order.}
Streaming algorithms for submodular maximization in the random-arrival model
have been studied~\cite{norouzi-fard_beyond_2018, AgrawalSS19,
liu_cardinality_2021}.  Interestingly, approximation factors better than~$1/2$
have been achieved in this model using low space.  Norouzi-Fard et al.
developed \textsc{salsa}, an $\tilde{O}(k)$-space algorithm that achieves an
approximation factor of $1/2 + c_0$ in-expectation, where $c_0 > 0$ is a small
absolute constant.  Subsequently, Agrawal et al.~\cite{AgrawalSS19} achieved
an approximation factor of $1 - 1 / e - \varepsilon - o(1)$ in-expectation
using $\tilde{O}_\varepsilon(k)$ space.  The $o(1)$ term in the approximation
factor is a function of $k$. For small $k$, this term is very large, in which
case \textsc{salsa} makes a better approximation guarantee; nonetheless the
algorithm of Agrawal et al. is a focus of this paper.  Warneke et
al.~\cite{WarnekeCW23} introduced low-space methods for maximum coverage in
random-order streams. Their approaches consume space proportional to~$k^2$: we
improve this to linear in~$k$, see~\Cref{sec:results} for details.  The
corresponding hardness result: $\Omega(m)$ space is required to achieve an
approximation factor of $1 - 1 / e + \varepsilon + o(1)$ in the random-arrival
model~\cite{liu_cardinality_2021}.\footnote{ This improves upon the
$\Omega(mk^{-3})$ space lower bound given by McGregor and
Vu~\cite{mcgregor_better_2018} for random-arrival set-streaming maximum
coverage, which applies to random-arrival submodular maximization, due to an
important connection between these two problems, explained shortly.  }

\subsection{Our Results, Approach, and Related Work}
\label{sec:results}

To establish our new results, we revisit recent work on the maximum coverage problem in the relevant data stream models and show how to refine and extend the approaches to substantially decrease the the amount of space needed to solve the problem in the data stream model. In the process, we also establish secondary results on speeding up the existing algorithms and in one case, developing a slightly more pass efficient algorithm.

In the dynamic stream model, our main result is the following.

\begin{result}[Algorithm for Dynamic Model]
There exists an $O(1+\epsilon^{-1}/\log \log m) \log m)$ pass algorithm in the dynamic data stream model that uses $\epsilon^{-2} \cdot k \cdot \polylog(n,m)$ space and returns a $1-1/e-\epsilon$ approximation.
\end{result}

This result reappears as \Cref{thm:dyn}. %
It is an improvement over the best previous algorithm by Assadi and Khanna
\cite{AssadiK18} that used $(n+\epsilon^{-4} k) \polylog(n,m)$ space. The
high-level approach is similar and is based on the existing $\ell_0$ sampling
primitive that allows uniform sampling amongst the sets that are inserted but
not deleted. In each pass, we pick additional sets and maintain the union $C$
of the selected sets. We pick the additional sets by sampling only among the
sets that include a ``meaningful'' number of elements that are not currently
in $C$. Unfortunately, when we simultaneously sample multiple sets in this
way, once we start adding sampled sets to our solution, some of the sampled
sets may no longer make a meaningful increment to $|C|$. However, by repeating
the process over multiple passes, with a slowly decreasing threshold for what
constitutes a meaningful improvement, we ensure that either we exhaust the
collection of sets that could potentially be worth adding or we add enough
sets and together  these yield a good approximate solution. The main point of
departure from Assadi and Khanna \cite{AssadiK18} is in the way we handle the
decreasing threshold. The largest threshold considered by Assadi and Khanna is
about $\opt/k$; this is sufficient for achieving a $1-1/e-\epsilon$
approximation factor, but has the downside that there can be many overlapping
sets that would contribute more than $\opt/k$ new elements to $C$; storing
these sets during the sampling process is expensive. Instead, we consider
thresholds decreasing from $\opt$ but handling the sets that make a
contribution between $\opt/2^i$ and $\opt/2^{i+1}$ for $i=0,\ldots, \log k$
separately but in parallel. We analyze a stochastic process---we call it the
\emph{cascading urns} process---to show how our new algorithm uses a similar number of passes as the algorithm by Assadi and Khanna \cite{AssadiK18}
but uses significantly less space. We present the details in
\Cref{sec:dynamic}. In fact, a more careful analysis of the process shows that
it is possible to reduce the number of passes to $O(\log m+ \epsilon^{-1} \log
m /\log \log m)$ in contrast to the $O(\epsilon^{-1} \log m)$ passes used by
previous algorithm. We believe that a similar pass saving can be achieved in
Assadi and Khanna \cite{AssadiK18} but it requires an extra log factor in the
space.

In the random order model, our main result is the following.

\begin{result}[Algorithm for Random Order Model]
There exists a single pass algorithm in the random order data stream model that uses $O_\epsilon(k \cdot \polylog(n,m))$ space and returns a $1-1/e-\epsilon$ approximation of maximum coverage in expectation. 
\end{result}

The above result is established in \Cref{sec:random-order-alg}. Note that the dependence on $\epsilon$ in the result is exponential and
this makes the algorithm less practical. The main significance of the result
is removing a factor $k$ in the space required by the best previous result, by
Warneke et al. \cite{WarnekeCW23}, that used $\epsilon^{-2} k^2 \polylog(n,m)$
space. Both their and our approach are based on modifying existing algorithms
for the cardinality constrained monotone submodular optimization problem. This
is a more general problem than maximum coverage, but it is assumed in the
general problem that the algorithm has oracle access to the function being
optimized. For maximum coverage, we need to store enough information to be
able to simulate this oracle. An $O_\epsilon(k^2 \log m)$-space algorithm for
maximum coverage follows immediately for any $O_\epsilon(k)$ space algorithm
for monotone submodular optimization because the universe subsampling
technique discussed in Section~\ref{sec:prelim} allows us to focus on the case
where set has size $O(\epsilon^{-2} k \log m)$.  To reduce the dependence on
$k$ to  linear, rather than quadratic, we need a more careful adaption of an
algorithm by Agrawal et al.~\cite{AgrawalSS19}. We maintain a set of size
$O(\epsilon^{-2} k \log m)$ corresponding to the items covered by a collection
of sets chosen for the solution so far, along with $O_\epsilon(1)$ sets of
size $O_\epsilon(\log m)$. We mention in passing that, being a thresholding
algorithm, \textsc{salsa} is easy to adapt to our $O_\epsilon(k \cdot
\polylog(n,m))$ regime, yielding an approximation factor of~$1/2+c_0$. We
present the details in \Cref{sec:random}.

Lastly, our results, along with numerous previous results, use the aforementioned universe sub-sampling technique introduced by McGregor and Vu \cite{McGregorV19} to sparsify the input instance. We explain how this technique and others used in the paper can be implemented such that the amortized update time of our algorithm is polylogarithmic. This also implies an improvement of the state-of-the-art insert only algorithms in terms of the update time: $\polylog(m,n)$ update time suffices whereas the best previous result by Jaud et al.~\cite{JaudWC23} required update time  $k\cdot \polylog(m,n)$. We present the details in \Cref{sec:fast}.

\section{Further Related Work}
Here we provide some more detail on the background materials.

\paragraph{Random arrivals.}
Guha and McGregor~\cite{guha_stream_2009} \S 1.1 further justify the assumptions of the random-arrival model.
This model has been studied for many important problems in other contexts, including quantile estimation~\cite{guha_stream_2009}, submodular maximization~\cite{Norouzi-FardTMZ18, AgrawalSS19}, and maximum matching~\cite{konrad_maximum_2014, assadi_beating_2021}, often revealing improved space-accuracy trade-offs.

\paragraph{Coverage in streams.}
Saha and Getoor \cite{SahaG09} gave a swap-based $1/4$ approximation  algorithm, \textsc{sops}, that uses a single pass and $\tO(kn)$ space. Their algorithm stores $k$ sets explicitly in the memory as the current solution and selectively replaces sets in memory as new sets arrive in the stream. 
Subsequently, Ausiello \etal \cite{AusielloBGLP12} gave a slightly different swap based algorithm that also finds a $1/4$ approximation using one pass and the same space.

Yu and Yuan~\cite{YuY13} improved upon \textsc{sops} under a more relaxed output specification, in which only the \textit{IDs} of the chosen sets must be returned.
Their algorithm algorithm, \textsc{gops}, achieves an approximation factor of approximately~0.3 using just $\tilde{O}(n)$ space.

 McGregor and Vu~\cite{McGregorV19} introduced several methods for \MaxCvg
 under set streaming. They presented four near-optimal algorithms for the
 problem, with the following specifications:
\begin{enumerate}
    \item An approximation factor of $1 - 1 / e - \varepsilon$ using $\tilde{O}(m\varepsilon^{-2})$ space.
    \item An approximation factor of $1 - 1 / e - \varepsilon$ using $\tilde{O}(k\varepsilon^{-2})$, requiring $\tilde{O}(\varepsilon^{-1})$ passes over the stream. This method is a key launchpad for our work.
    \item An approximation factor of $1 - \varepsilon$ using $\tilde{O}(m\varepsilon^{-3})$ space (and exponential time).
    \item An approximation factor of $1/2 - \varepsilon$ using just $\tilde{O}(k\varepsilon^{-3})$ space.
\end{enumerate}
Subsequently, McGregor et al.~\cite{McGregorTV21} showed how to solve max coverage exactly in one pass and in space~$\tilde{O}(d^{d+1}k^d)$ when every set is of size at most~$d$.

\paragraph{Dynamic streams.}
In the dynamic graph stream model~\cite{McGregor14},
the stream consists of insertions and deletions of edges of the underlying graph~\cite{AhnCGMW15,AhnGM12a,AhnGM12b,AhnGM13,KapralovLMMS14,KapralovW14,GuhaMT15,BhattacharyaHNT15,ChitnisCEHMMV16,AssadiKLY16,Konrad15,McGregorVV16,McGregorTVV15}, relevant to the maximum vertex coverage problem. An algorithm for the dynamic graph stream model can also be used in the streaming-set model; the streaming-set model is simply a special case in which edges are grouped by endpoint.

\paragraph{Lower bounds.}
McGregor and Vu~\cite{McGregorV19} introduced the first nontrivial space lower
bound for set-streaming maximum coverage. For all $\varepsilon > 0$, achieving
an approximation factor better than $1 - 1 / e + o(1)$ requires
$\Omega(mk^{-2})$ space in the arbitrary-arrival case, and $\Omega(mk^{-3})$
space in the random-arrival case, where the $o(1)$ term is a function of $k$.
This also represented the first result explicitly regarding random-arrival
maximum coverage.  Assadi~\cite{assadi_tight_2017} showed that achieving a $(1
- \varepsilon)$-approximation requires $\Omega(m\varepsilon^{-2})$ space, even
in the random-arrival model, almost matching the
$\tilde{O}(m\varepsilon^{-3})$ upper bound achieved by McGregor and
Vu~\cite{McGregorV19}. More recently, Feldman et
al.~\cite{feldman_one-way_2020} showed that $\Omega(mk^{-3})$ space is
required to beat $1/2 + o(1)$ in the arbitrary-arrival model.

\section{Preliminaries}
\label{sec:prelim}
Without loss of generality we assume each set is tagged with a unique ID in the range $[m^3n]$, for if not, we could generate suitable IDs via random hashing of the contents of the sets themselves.\footnote{E.g., hash each set $S\in [n]$, to $f_S(r)$, where $f_S(X)=\prod_{u\in S} (X-u)$ is a polynomial over a prime field $\FF_p$ of cardinality $\Theta(m^3 n)$ and $r$ is uniformly random in $\FF_p$. By elementary analysis, the map $S \mapsto f_S(r)$ is non-injective with probability at most $\binom{m}{2}n/p\leq 1/m$.} In the dynamic setting, we will want to sample uniformly from the sets that are inserted but not deleted (sometimes we will put additional requirements on the relevant sets). To do this, we will use the standard technique of $\ell_0$ sampling.

\begin{theorem}[$\ell_0$ sampling \cite{JowhariST11}]
  There is a one-pass algorithm that processes a stream of tokens $\langle
  x_1, \Delta_1\rangle, \langle x_2, \Delta_2\rangle, \ldots, $ where each
  $x_i\in \{1,\ldots, M\}$ and $\Delta_i\in \{-1,1\}$, using $O(\log^2(M)
  \log(1/\delta))$ bits of space, that, with probability $1-\delta$, returns
  an element chosen uniformly at random from the set $\{x\in [M]:
  \sum_{i:x_i=x} \Delta_i\neq 0\}$.
\end{theorem}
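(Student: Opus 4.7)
The plan is to adapt the standard $\ell_0$-sampling construction of Jowhari, S\u{a}glam, and Tardos, which combines geometric sub-sampling of the universe with sparse recovery at each scale. View the stream as incrementally updating a frequency vector $f \in \mathbb{Z}^M$ where $f_x = \sum_{i : x_i = x} \Delta_i$, so the task is to return a uniformly random element of $\mathrm{supp}(f) := \{x \in [M] : f_x \neq 0\}$.

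First, I would sub-sample the universe at $L = \lceil \log_2 M \rceil + 1$ geometric rates. Draw a hash function $h : [M] \to [M]$ from a $k$-wise independent family (for a small constant $k$), representable in $O(\log M)$ bits, and for each level $j \in \{0, 1, \ldots, L\}$ declare the level-$j$ sub-sample to be the set of coordinates $x$ with $h(x) \le M/2^j$. Writing $N := |\mathrm{supp}(f)|$, an application of Chebyshev's inequality shows that at the level $j^\star$ with $j^\star \approx \log_2 N$, the number of support coordinates surviving in the sub-sample lies between $1$ and some absolute constant $s$ with constant probability.

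Next, for each level $j$ I maintain a linear $s$-sparse recovery sketch for $f$ restricted to the level-$j$ sub-sample. Every update $\langle x_i, \Delta_i \rangle$ is routed to precisely those levels $j$ with $h(x_i) \le M/2^j$, and within each level the sketch accumulates a few weighted sums of the form $\sum_x f^{(j)}_x$ and $\sum_x x^t f^{(j)}_x$ for $t = 1, \ldots, O(s)$, along with a polynomial fingerprint $\sum_x r^x f^{(j)}_x$ over a prime field of size $\mathrm{poly}(M)$ for a uniformly random $r$. When the level-$j$ sub-sampled vector is at most $s$-sparse, the weighted-sum equations recover its support exactly (treating them as elementary symmetric polynomial identities), and the fingerprint test rules out false positives except with probability $O(1/M)$. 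Each per-level sketch uses $O(s \log M) = O(\log M)$ bits. At decoding time I scan the $L$ levels and output a uniformly random coordinate from the recovered support at the smallest level whose sparsity test succeeds; the total per-instance space is $O(L \cdot \log M) = O(\log^2 M)$ bits. Finally, running $\Theta(\log(1/\delta))$ independent copies in parallel and returning the first successful sample boosts the success probability to $1 - \delta$, yielding the claimed space bound $O(\log^2(M) \log(1/\delta))$.

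The main obstacle I expect is guaranteeing \emph{exact} uniformity of the output over $\mathrm{supp}(f)$ rather than merely approximate uniformity. With fully independent hashing this is immediate by the symmetry of which support element survives at level $j^\star$; with the limited-independence family that space forces us to use, one has to argue carefully that, conditioned on the event that the sparse-recovery sketch at the selected level decodes successfully, each element of $\mathrm{supp}(f)$ is equally likely to be the one returned. Handling this typically requires either choosing the hash family so that its behavior on any fixed subset of size up to $s$ is indistinguishable from uniform, or introducing an internal random tiebreak and invoking an exchangeability argument. Everything else---routing updates, verifying $s$-sparsity via fingerprints, and probability amplification---is routine linear-sketch bookkeeping.
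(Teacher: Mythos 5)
The paper does not prove this statement: it imports it verbatim from Jowhari, S\u{a}glam, and Tardos \cite{JowhariST11}, so there is no in-paper proof to compare against. Your sketch reconstructs essentially the construction of that reference---geometric subsampling of the universe, per-level exact $s$-sparse recovery guarded by a polynomial fingerprint, and $\Theta(\log(1/\delta))$-fold amplification, for $O(\log^2(M)\log(1/\delta))$ bits---and the one subtlety you flag (exact uniformity of the returned element under the limited-independence hashing that the space bound forces) is precisely the delicate point in \cite{JowhariST11}, resolved there essentially as you anticipate.
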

Specifically, we will use the above technique when $M = m^3 n$, with each
$x_i$ being the ID of a set and the corresponding $\Delta_i$ specifying
whether the set is being inserted or deleted. 

As mentioned in the introduction, the natural (offline) greedy approach achieves a
$(1-1/e)$-approximation for maximum coverage. In streaming settings, a
``quantized'' version of the algorithm achieves similar results, as summarized
below.

\begin{theorem}[Quantized Greedy Algorithm, e.g.,~\cite{McGregorV19}]\label{thm:quantizedgreedy}
  Consider the algorithm that processes a stream $\langle S_1, S_2, \ldots
  \rangle$ of sets as follows. Starting with an empty collection $Y$, it makes
  $p$ passes: in the $i$th pass, it adds to $Y$ each encountered set that
  covers $\ge \tau_i$ uncovered elements. If, at some point, $|Y| = k$, then
  the algorithm stops and returns $Y$. If these threshold parameters $\tau_1 >
  \cdots > \tau_p$ satisfy 
  \begin{enumerate}
    \item $\tau_1\geq \opt/k$,
    \item $\tau_p<\opt/(4ek)$, and
    \item $\tau_{i}/\tau_{i+1}\leq 1+\epsilon$, for all $i\in [p-1]$,
  \end{enumerate} 
  then the algorithm achieves a $(1-1/e-\epsilon)$-approximation for the
  maximum coverage problem. 
\end{theorem}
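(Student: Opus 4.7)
My plan is to split on whether the algorithm exhausts all $p$ passes without filling $Y$ or stops early with $|Y|=k$.

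If the algorithm finishes pass~$p$ with $|Y|<k$, then every set $S$ in the stream was examined and skipped in pass~$p$, so $|S\setminus\bigcup Y|<\tau_p$ at the time of skipping; since $Y$ only grows, this bound persists to the end. Summing this inequality over any optimal family of $k$ sets yields $\opt \leq |\bigcup Y| + k\tau_p$, which by condition~2 gives $|\bigcup Y| > (1-1/(4e))\opt > (1-1/e)\opt$, finishing this case.

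In the main case $|Y|=k$, let $T_1,\ldots,T_k$ be the chosen sets in the order they were added, with $T_j$ added in pass $i_j$, and write $c_j=|T_1\cup\cdots\cup T_j|$ with $c_0=0$. The heart of the argument is the greedy-style recurrence
\[
  c_j - c_{j-1} \;\geq\; \frac{\opt - c_{j-1}}{(1+\epsilon)\,k}, \qquad j\in[k].
\]
Iterating gives $\opt - c_k \leq \opt\bigl(1 - 1/((1+\epsilon)k)\bigr)^k \leq \opt\cdot e^{-1/(1+\epsilon)}$, and using $1/(1+\epsilon)\geq 1-\epsilon$ together with $e^{\epsilon}\leq 1+2\epsilon$ for $\epsilon\leq 1$ bounds this by $(1/e + 2\epsilon/e)\opt \leq (1/e+\epsilon)\opt$, yielding the claimed approximation factor.

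To establish the recurrence, I freeze time just before $T_j$ is added, so the current coverage is $c_{j-1}$. A union-bound averaging argument over an optimal family $O=\{O_1,\ldots,O_k\}$ produces some $O^*\in O$ whose uncovered contribution $|O^*\setminus(T_1\cup\cdots\cup T_{j-1})|$ is at least $(\opt-c_{j-1})/k$, and $O^*\notin\{T_1,\ldots,T_{j-1}\}$ (else its contribution is zero). I then split on the processing status of $O^*$: (A) $O^*$ was examined and skipped earlier in pass~$i_j$, so monotonicity of uncovered contribution gives its current value $<\tau_{i_j} \leq |T_j\setminus(T_1\cup\cdots\cup T_{j-1})|$; (B) $i_j\geq 2$ and $O^*$ has not yet been examined in pass~$i_j$, in which case it was skipped in pass~$i_j-1$, so its current contribution is $<\tau_{i_j-1}\leq (1+\epsilon)\tau_{i_j}$ by condition~3; (C) $i_j=1$ and $O^*$ lies further down the stream, a case with no earlier-pass information, which I handle by invoking condition~1 directly: $T_j$'s contribution is at least $\tau_1\geq \opt/k\geq (\opt-c_{j-1})/k$. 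Case~(C) is the main obstacle, since it is the only one with no monotonicity handle on $O^*$; condition~1 is precisely what carries it through. All three conditions thus enter organically---condition~2 in the easy case, condition~3 in case~(B), and condition~1 in case~(C).
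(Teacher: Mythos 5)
Your proof is correct. The paper states this theorem without proof, citing McGregor and Vu, and your argument is exactly the standard analysis of threshold-based quantized greedy: the under-filled case via condition~2, and the full case via the recurrence $c_j-c_{j-1}\geq(\opt-c_{j-1})/((1+\epsilon)k)$, with the three-way split on where the witness optimal set sits in the pass structure correctly deploying conditions~1 and~3.
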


Both of our main algorithmic results appeal to the following technique that,
given a guess\footnote{We run the final algorithms with guesses $v=1, 2, 4, 8,
\ldots 2^{\log n}$  and use the fact that one of these guesses is within a
factor 2 of $\opt$.} $v$ of the value $\opt$, transforms the \MaxCvg instance
into one in which the optimum solution covers $O(\epsilon^{-2} k \log m)$
elements. Note that this immediately implies that every set in the input has
cardinality $O(\epsilon^{-2} k \log m)$. In what follows, we will tacitly
assume that the given instance is filtered through this technique before being
fed into the algorithms we design. In particular, we will appeal to the following theorem \cite{McGregorV19,JaudWC23}.

\begin{theorem}[Universe Subsampling]
  Let the function $h:[n]\rightarrow \{0,1\}$ be drawn uniformly at random
  from a family of $O(k \log m)$-wise independent hash functions, where 
  \[
    p := \prob{h(e) = 1} = {\lambda}/{v}, \quad \text{for } \lambda = 10 \epsilon^{-2} k \,,
  \]
  and $v$ satisfies $\opt/2 \leq v \leq \opt$.  An $h$-sparsification of an
  instance $\cI$ of \MaxCvg is formed by replacing every set $S$ in $\cI$
  by $\{e\in S: h(e)=1\}$.
  With high probability, any $\alpha$-approximation for the $h$-sparsification
  of $\cI$ yields an $(\alpha-\epsilon)$-approximation for $\cI$.
\end{theorem}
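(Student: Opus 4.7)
The plan is to show that with high probability over the choice of $h$, every collection $\cC$ of $k$ sets satisfies
\[
  \bigl|\hat f(\cC) - p\,f(\cC)\bigr| \le \epsilon\,p\,\opt/4,
\]
where $f(\cC) := |\bigcup_{S\in\cC} S|$ is the coverage in $\cI$ and $\hat f(\cC) := |\{e\in\bigcup_{S\in\cC} S : h(e)=1\}|$ is the coverage in the $h$-sparsification. Once this uniform concentration holds, the reduction from an $\alpha$-approximation on the sparsified instance to an $(\alpha-\epsilon)$-approximation on $\cI$ is immediate: if $\hat\cC$ is an $\alpha$-approximate solution to the sparsified instance and $\cC^\star$ is the true optimum of $\cI$, then
\[
  p\,f(\hat\cC) + \tfrac{\epsilon}{4} p\,\opt \;\ge\; \hat f(\hat\cC) \;\ge\; \alpha\,\hat f(\cC^\star) \;\ge\; \alpha\bigl(p\,f(\cC^\star) - \tfrac{\epsilon}{4} p\,\opt\bigr),
\]
so dividing by $p$ and using $\alpha\le 1$ yields $f(\hat\cC) \ge (\alpha - \epsilon/2)\,\opt$.

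To obtain the uniform concentration, first fix a collection $\cC$. The variable $\hat f(\cC) = \sum_{e\in U_\cC} h(e)$ is a sum of Bernoulli$(p)$ indicators drawn from an $O(k\log m)$-wise independent family, so the indicators are themselves $O(k\log m)$-wise independent. I would invoke a tail bound for sums of $t$-wise independent $\{0,1\}$ random variables in the Schmidt--Siegel--Srinivasan / Bellare--Rompel style, which yields, for $t = \Theta(k \log m)$,
\[
  \Pr\bigl[|\hat f(\cC) - p\,f(\cC)| > \tfrac{\epsilon}{4} p\,\opt\bigr] \;\le\; m^{-\Omega(k)}.
\]
The two useful regimes to handle separately are $f(\cC) \ge \epsilon\,\opt/4$ (where standard multiplicative Chernoff kicks in because $\EE[\hat f(\cC)] = p f(\cC) = \Omega(\epsilon^{-1} k)$ while the allowed deviation is $\Theta(\epsilon^{-1} k)$) and $f(\cC) < \epsilon\,\opt/4$ (where one instead bounds the upper tail showing $\hat f(\cC)$ cannot much exceed $p\,\epsilon\,\opt/2$, again via the same $t$-wise bound on the moments of $\hat f(\cC) - \EE \hat f(\cC)$).

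A union bound over all $\binom{m}{k} \le m^k$ candidate collections then leaves the failure probability at $m^{-\Omega(k)}$, giving the desired uniform concentration with high probability. Combined with the reduction in the first paragraph, this proves the theorem.

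The main obstacle is calibrating the independence parameter and the moment-based tail bound so that the per-collection failure probability $m^{-\Omega(k)}$ is strong enough to survive the $m^k$-sized union bound \emph{and} covers both the multiplicative regime (large $f(\cC)$) and the ``small mean'' regime (where only an additive statement is meaningful); this is precisely why the hash family is required to be $\Theta(k\log m)$-wise independent rather than merely $\polylog(m)$-wise independent, and why the sampling probability $p$ is scaled to $10\epsilon^{-2} k / v$ so that $p\,\opt = \Theta(\epsilon^{-2} k)$ dominates the fluctuations.
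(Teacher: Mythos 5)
The paper itself gives no proof of this theorem---it imports it from McGregor--Vu and Jaud et al.---and your architecture (uniform concentration of $\hat f(\cC)$ around $p\,f(\cC)$ for every $k$-subset $\cC$, a union bound over $\binom{m}{k}\le m^k$ collections, then the two-line approximation transfer) is exactly the standard argument from those sources. The transfer step and the case split between large and small $f(\cC)$ are both fine.

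The gap is in the calibration you yourself flag as ``the main obstacle,'' and it does not close with the parameters as stated. With $\lambda = 10\epsilon^{-2}k$ you have $\mu := \EE[\hat f(\cC)] \le p\,\opt = \Theta(\epsilon^{-2}k)$ and allowed deviation $A = \epsilon\, p\,\opt/4 = \Theta(\epsilon^{-1}k)$, i.e.\ a relative deviation of order $\epsilon$. Even under \emph{full} independence, the Chernoff bound gives per-collection failure probability only $\exp(-\Theta(\epsilon^2\mu)) = \exp(-\Theta(k))$, which is $m^{-\Theta(k/\log m)}$ and cannot survive a union bound over $m^k$ collections. Limited independence can only do worse: the Bellare--Rompel/Schmidt--Siegel--Srinivasan bounds have the shape $\Pr[|X-\mu|\ge A] \le \bigl((t\mu+t^2)/A^2\bigr)^{t/2}$ (or require $t \lesssim \epsilon^2\mu$ to give $e^{-\Omega(t)}$), so taking $t = \Theta(k\log m)$ when $\epsilon^2\mu = \Theta(k)$ makes the base exceed $1$ and the bound vacuous. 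Your closing claim that ``$p\,\opt = \Theta(\epsilon^{-2}k)$ dominates the fluctuations'' is precisely where this breaks: raising the independence parameter cannot substitute for a larger mean. The fix is that $\lambda$ must be $\Theta(\epsilon^{-2}k\log m)$, so that $\epsilon^2\mu = \Theta(k\log m)$ matches the independence $t = \Theta(k\log m)$ and yields per-collection failure $m^{-\Omega(k)}$; this is consistent with the paper's own downstream usage (sparsified sets of size $O(\epsilon^{-2}k\log m)$) and with the original McGregor--Vu parameters, and strongly suggests the theorem statement's $\lambda = 10\epsilon^{-2}k$ is missing a $\log m$ factor. With that correction, your proof goes through.
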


In Section~\ref{sec:fast}, we will discuss how to implement the universe subsampling such that the update time of the resulting algorithm is logarithmic for each element of a set in the stream.

\section{Dynamic Streams}
\label{sec:dynamic}
Our main algorithm for dynamic streams is based on the following variant
(\Cref{alg:dyn}) of the quantized greedy algorithm. It requires an
estimate $v$ satisfying $\opt/2 \leq v < \opt$. To describe it compactly, we
set up some notation. Let $\cS$ denote the collection of sets yielded by the
input stream. For a given $C\subset [n]$, define the subcollections
\begin{alignat*}{2}
  \cF_i^C &= \{S \in \cS:  \theta_i \leq |S\setminus C| < \theta_{i-1}\} \,, 
    &\quad& \text{for } i \in [\ell] \,, \\
  \cG_{i}^C &= \{S\in \cS: \tau_i \leq |S\setminus C| < \tau_{i-1} \} \,,
    && \text{for } i \in [1+\lceil \log_{1+\epsilon} (16e)] \rceil \,,
\end{alignat*}
where
\[
  \ell = \rounddown{\log k} \,, \quad 
  \theta_i :=\frac{2v}{2^i} \,, \quad \text{and~~}
  \tau_i := \frac{2v/2^{\ell}}{(1+\epsilon)^{i-1}} \,.
\]
In the pseudocode below, the computed solution $Y$ is a set of IDs and the set
$C$ is maintained to be the subset of the universe covered by the sets
represented in $Y$. The macro\footnote{Being a macro, \textsc{grow-solution}
can cause the invoking algorithm to return.}
\Call{grow-solution}{$Y,S,C,\theta$} implements the following logic: if $|S
\setminus C| \ge \theta$, then update $Y \gets Y \cup \{\id(S)\}$ (i.e., add
the set $S$ to the solution) and $C \gets C \cup S$; if this makes $|Y| = k$,
then stop and return $Y$.
\begin{algorithm}[!htbp]
\caption{Quantized greedy adapted for dynamic set streams}
\begin{algorithmic}[1]
\State start with an empty solution ($Y \gets \emptyset$) and let $C \gets \emptyset$
\While{$\cF_1^C\cup \cdots \cup \cF_\ell^C \ne \emptyset$} \label{line:cascade}
    \For{$i \in [\ell]$} {sample $2^i$ sets from $\cF_i^C$ with replacement} \EndFor
    \For{$i\in [\ell]$ and each set $S$ sampled from $\cF_i^C$}
        \State \Call{grow-solution}{$Y,S,C,\theta_i$}
    \EndFor
\EndWhile
\For{$i \gets 1$ to $1 + \roundup{\log_{1+\epsilon}(16e)}$}
    \While{$\cG_{i}^C \ne \emptyset$} \label{line:one-urn}
        \State sample $k$ sets from $\cG_{i}^C$ with replacement
        \For{each sampled set $S$} {\Call{grow-solution}{$Y,S,C,\tau_i$}} \EndFor
    \EndWhile
\EndFor
\end{algorithmic}
\label{alg:dyn}
\end{algorithm}

The most involved part of the analysis of \Cref{alg:dyn} is
the proof of the following lemma, which we shall defer slightly.

\begin{lemma} \label{lem:dyn-iters}
  With high probability, the while loop at \cref{line:cascade} makes
  $O(\log m)$ iterations and each invocation of the while loop at
  \cref{line:one-urn} makes $O(\log m/\log \log m)$ iterations.
\end{lemma}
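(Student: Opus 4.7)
I would analyze the two while loops by the same underlying mechanism: each iteration's sampling step removes sets from the relevant urn(s), since a distinct sampled set is, after \textsc{grow-solution} finishes processing it, either in $Y$ (contribution $0$) or has uncovered contribution below the urn's threshold because $C$ has grown in the meantime. The two bounds differ because the sample sizes ($2^i$ per level for the cascade loop, $k$ for the one-urn loop) and the structural widths of the urns differ.

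\textbf{Cascade loop.} I would work with the potential $\Phi_t := \sum_{i\in[\ell]}|\cF_i^{C_t}|$, which is at most $m$ initially. A Chernoff-type balls-into-bins estimate shows that the $2^i$ uniform-with-replacement samples drawn from $\cF_i^{C_t}$ touch $\Omega(\min(2^i, |\cF_i^{C_t}|))$ distinct sets with probability $1-m^{-\Omega(1)}$, each of which is removed from $\cF_i^C$ after processing. Summing across the $\ell$ levels, a non-terminating iteration shrinks $\Phi_t$ by at least a constant factor with high probability. Combined with $\Phi_0 \le m$ and a union bound over iterations, this yields the $O(\log m)$ bound.

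\textbf{One-urn loop and main obstacle.} For a fixed outer index $i$, let $N_t := |\cG_i^{C_t}|$. My plan is to establish a per-iteration dichotomy, valid with probability $1-m^{-\Omega(1)}$ conditional on non-termination: (A) at least $k\log\log m/\log m$ sets are added to $Y$ in that iteration, or (B) $N_{t+1} \le N_t/\log m$. Under this dichotomy, Case (A) can occur at most $\log m/\log\log m$ times before $|Y|=k$, and Case (B) at most $\log_{\log m} m = \log m/\log\log m$ times before the urn is emptied, so the total number of iterations per invocation is $O(\log m/\log\log m)$ after a union bound over iterations and over the $O(\epsilon^{-1})$ outer indices. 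Case (B) is immediate when $N_t \le k/\log m$ by a coupon-collector estimate: every set in $\cG_i^C$ is sampled w.h.p., so all such sets are either added to $Y$ or evicted by the end of that iteration. The main obstacle will be proving (B) in the regime $N_t > k/\log m$, where the $\Theta(k)$ distinct samples alone yield only an additive decrease of $\Theta(k)$, which is insufficient when $N_t \gg k$. To close this gap I would exploit a structural property of this phase: every set in $\cG_i^C$ has uncovered contribution lying in the narrow window $[\tau_i,\tau_{i-1})$ of multiplicative width $1+\epsilon$, so each such set has slack $|S\setminus C|-\tau_i \in [0,\epsilon\tau_i)$ and the urn is ``brittle,'' in the sense that any element added to $C$ that lies in a set's slack region evicts that set. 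The heart of the argument is to show that when (A) fails yet most sampled sets are nevertheless evicted at process time, the growth of $C$ during the iteration must also evict a comparable fraction of the unsampled sets of $\cG_i^C$, yielding (B). I expect this to require a careful coupling of the uniform sampling, the processing order of samples, and the element-level updates to $C$, together with a Chernoff-type concentration bound.
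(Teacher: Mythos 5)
Your high-level dichotomy for the one-urn loop (either $\Omega(k\log\log m/\log m)$ additions to $Y$, or the urn shrinks by roughly a $\log\log m/\log m$ factor) is exactly the right shape and matches the paper's, but two of your key steps do not go through as planned, and the paper resolves both differently. First, you ask for the dichotomy to hold with probability $1-m^{-\Omega(1)}$ \emph{per iteration}; this is unprovable for small $k$. With $\gamma=\log\log m/\log m$, even in the benign case where the gold fraction stays near $\gamma$ all iteration, the number of successful draws concentrates only up to failure probability $e^{-\Omega(\gamma k)}$, which is $1/\polylog(m)$ (not $m^{-\Omega(1)}$) when $k=O(\log m)$. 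The paper's fix is to demand only \emph{constant} success probability per phase: define the per-draw event ``gold, or the gold fraction has already dropped below $\gamma$,'' show its conditional probability is always at least $\gamma$, deduce by an explicit stochastic-domination argument that the phase succeeds with probability $\ge 1/2$, and then apply a Chernoff bound \emph{across the $O(\log m/\log\log m)$ phases} rather than within one. Second, your ``main obstacle''---showing that when few sets are added the growth of $C$ must also evict many \emph{unsampled} sets of $\cG_i^C$---is a dead end: it is simply false (adding one set evicts only the sets it intersects in their slack region, which can be just itself), and, more importantly, it is not needed. In the paper's framing the eviction of unsampled sets is left entirely adversarial; the branch ``urn shrinks'' is triggered not by proving eviction but by the event that the gold \emph{fraction} drops below $\gamma$ at some point during the phase, and if that event never occurs then every draw is gold with probability $\ge\gamma$ and the other branch fires in expectation.

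Your cascade-loop argument has a more basic gap. The unweighted potential $\Phi_t=\sum_i|\cF_i^{C_t}|$ need not decrease by a constant factor per iteration, for two reasons: (i) a sampled set that fails the level-$i$ threshold is removed from $\cF_i^C$ but typically reappears in $\cF_j^C$ for some $j>i$, so it does not leave $\Phi$ at all; and (ii) level $i$ draws only $2^i$ samples, so when $|\cF_1^{C}|=\Theta(m)$ the samples touch $O(1)$ of its sets---nowhere near a constant fraction. The paper handles both issues with a \emph{weighted} potential $Q_p=\sum_i 2^{\ell-i}|\cF_i^{C_p}|$ (so that demoting a set to a deeper level halves its weight) together with the same two-branch structure as above: in each phase, with constant probability, either $\Omega(1)$ of the total ``point budget'' is earned (each level-$i$ addition contributes $\ge\theta_i=2v/2^i$ coverage, so only $O(2^i)$ additions are ever possible at level $i$, which is why $2^i$ draws per level suffice) or a constant fraction of the balls turn to lead and $Q_p$ drops by a constant factor, giving $O(\log(2^\ell m))=O(\log m)$ phases. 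If you want to repair your write-up, replace $\Phi$ by this weighted potential and replace both per-iteration w.h.p. claims by constant-probability-per-phase claims aggregated by Chernoff over phases.
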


\begin{theorem} \label{thm:dyn}
  There is a $(1-1/e-\epsilon)$-approximation algorithm for max coverage in
  the dynamic stream model that uses $\tilde{O}(k/\epsilon^{2})$ space and 
  $O((1+\epsilon^{-1}/\log \log m) \log m)$ passes.
\end{theorem}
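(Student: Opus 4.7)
The plan is to establish three claims about \Cref{alg:dyn}: it uses $\tilde{O}(k/\epsilon^2)$ space, runs in $O((1+\epsilon^{-1}/\log\log m)\log m)$ passes, and returns a $(1-1/e-\epsilon)$-approximation. The pass count falls out directly from \Cref{lem:dyn-iters}: the cascading while loop contributes $O(\log m)$ iterations, and the $\tau$ phase consists of $1+\lceil\log_{1+\epsilon}(16e)\rceil = O(1+\epsilon^{-1})$ buckets, each contributing $O(\log m/\log\log m)$ iterations; each iteration is implemented in $O(1)$ stream passes (one to refresh the $\ell_0$ samplers tracking $\cF_i^C$ or $\cG_i^C$, and one to retrieve the contents of the sampled sets so that \textsc{grow-solution} can evaluate $|S\setminus C|$). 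Summing matches the claimed bound.

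For the space bound I would argue that, at any moment, the algorithm stores only: the IDs in $Y$ (at most $k$); the cover $C$, which by the Universe Subsampling theorem lives in a universe of size $O(\epsilon^{-2}k\log m)$ and hence fits in $\tilde{O}(k/\epsilon^2)$ bits; the $\tilde{O}(k)$ polylog-sized $\ell_0$ samplers needed to produce the draws for the current iteration; and the sparse-recovery buffers used in the retrieval pass to reconstruct the sampled sets. The buffer cost is controlled by discarding each sample as soon as its running count of new elements either certifies $|S\setminus C|\geq\theta$ or is known to fall short, and by charging admitted elements against the growth of $C$, which is capped at $|C|\leq\tilde{O}(k/\epsilon^2)$.

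The main obstacle is the approximation analysis, because the $\theta$-thresholds drop by a factor of $2$ rather than $1+\epsilon$ and so do not literally satisfy the hypothesis of \Cref{thm:quantizedgreedy}. I plan to handle this with a case split on how the cascading while loop exits. If $|Y|=k$ when the loop ends, every one of the $k$ selected sets was sampled from some $\cF_i^C$ and therefore had marginal gain at least $\theta_\ell=2v/2^\ell\geq 2v/k\geq\opt/k$ at insertion, whence $|C|\geq\opt$ directly. Otherwise the loop exits because $\cF_i^C=\emptyset$ for every $i\in[\ell]$, in which case every remaining $S\in\cS\setminus Y$ satisfies $|S\setminus C|<\theta_\ell=\tau_1$, so the state $(Y,C)$ entering the $\tau$ phase matches the state that a quantized-greedy run on the fine thresholds $\tau_1>\tau_2>\cdots>\tau_p$ would be in after its first outer pass. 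The sequence $\tau_1,\ldots,\tau_p$ satisfies all three hypotheses of \Cref{thm:quantizedgreedy}: $\tau_1=2v/2^\ell\geq\opt/k$; $\tau_p\leq\tau_1/(16e)\leq v/(4ek)<\opt/(4ek)$; and $\tau_i/\tau_{i+1}=1+\epsilon$. Moreover, each addition made during the cascading phase contributes at least $\opt/k\geq(\opt-|C|)/k$, matching or exceeding what standard greedy would have selected at that step. Coupling with a greedy run and invoking the usual telescoping for quantized greedy then delivers the $(1-1/e-\epsilon)$ bound.
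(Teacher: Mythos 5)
Your proposal is correct and follows essentially the same route as the paper: implement \Cref{alg:dyn} with two streaming passes per loop iteration (one $\ell_0$-sampling pass, one retrieval pass storing $S\setminus C$ for the samples), invoke \Cref{lem:dyn-iters} for the pass count, bound the stored elements by $\tilde{O}(k/\epsilon^2)$, and derive the approximation from \Cref{thm:quantizedgreedy} via $\tau_1=2v/2^\ell\geq\opt/k$ and $\tau_p<\opt/(4ek)$. Your explicit case split justifying why the factor-$2$ cascading thresholds do not break the hypotheses of \Cref{thm:quantizedgreedy} (every cascading-phase addition has marginal gain at least $\theta_\ell=\tau_1$, so the whole run is a legitimate quantized-greedy execution) is a detail the paper asserts without elaboration, and is a worthwhile addition.
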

\begin{proof}
  The idea is to implement \Cref{alg:dyn} in a small number of space-efficient
  streaming passes. To sample from $\cF_{i}^C$ and check the non-emptiness
  condition of the while loop at \cref{line:cascade}, we use an
  $\ell_0$-sampling pass on the IDs of the sets where $|S\setminus C|$ is in
  the relevant range.  Using an additional pass, we store $\{S\setminus C\}$
  for all the sampled sets, so that we can implement the logic of
  \textsc{grow-solution}. These sets contain at most
  \[
    \sum_{i=1}^{\ell} 2^i \cdot 2v/2^{i-1} 
    = 4v\ell = O(\epsilon^{-2} k \log k)
  \] 
  elements. Similarly we can sample sets from each $\cG_{i}^C$ while only storing
  $O(\epsilon^{-2} k \log k)$ elements. 

  Since each iteration of each while loop
  corresponds to two streaming passes, 
  \Cref{lem:dyn-iters} shows that
  w.h.p.~the algorithm uses at most the claimed number of passes.
  The approximation guarantee follows from \Cref{thm:quantizedgreedy},
  since 
  \[
    \tau_1=2v/2^{\ell} > \opt/k
  \]
  and
  \[
    \tau_{1 + \roundup{ \log_{1+\epsilon} (16e) }}
    = \frac{2v/2^{\ell}}{(1+\epsilon)^{\roundup{ \log_{1+\epsilon} (16e) }}}
    < \frac{4\opt}{16ek}=\frac{\opt}{4ek} \,.
    \qedhere
  \] 
\end{proof}
 
It remains to prove \Cref{lem:dyn-iters}. We need to understand how the
collections $\cF_i^C$ and $\cG_i^C$ evolve as we grow our solution $Y$, thus
changing $C$. To this end, we introduce two stochastic {\em urn processes}:
the first (and simpler) process models the evolution of the collection
$\cG_i^C$, for a particular $i$; the second process models the evolution of
the ensemble $\{\cF_i^C\}$. How exactly the urn processes model these evolutions is explained in \Cref{sec:urn-applic}.

\subsection{The Single Urn Process and its Analysis}

An urn contains $m$ balls, each initially {\em gold}. Balls are drawn from the
urn in phases, where a phase consists of drawing $d$ balls uniformly at
random, with replacement. When a ball is drawn, if it is gold, then we earn
one point and some arbitrary subset of the balls in the urn turn into {\em
lead}, including the ball that was just drawn (and put back). At the end of
the phase, all lead balls are removed from the urn and the next phase begins.
The process ends when either $d$ points have been earned or the urn is empty.

We will prove the following result about this process.

\begin{theorem} \label{thm:urnrounds} 
  With probability $\ge 9/10$, the urn process ends within $O(\log m / \log
  \log m)$ phases. 
\end{theorem}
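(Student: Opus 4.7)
My plan is to bound the adversary by first reducing to a canonical strategy, then exhibiting a per-phase dichotomy (``shrinking'' vs.\ ``pointy''), and finally combining the two cases via Chernoff concentration. An exchange argument shows it suffices to assume the adversary commits at the start of each phase to a survivor target $s_i \in \{1,\ldots,m_{i-1}\}$ and, upon the first gold draw of the phase, lead-ifies exactly $m_{i-1}-s_i-1$ additional balls (any choice works by symmetry), because delaying a lead-ification leaves balls exposed to being drawn as gold in the interim, strictly increasing $\mathbb{E}[q_i]$ (which the adversary wants to minimize). Under this canonical policy, $q_i = 1 + X_i$, where $X_i$ counts the distinct balls of the size-$s_i$ survivor set drawn among the remaining $d-1$ i.i.d.\ uniform draws; a short calculation yields
\[
  \mathbb{E}[q_i \mid s_i, m_{i-1}] \;=\; 1 + s_i\bigl(1-(1-1/m_{i-1})^{d-1}\bigr) \;\ge\; 1 + \frac{(d-1)s_i}{2m_{i-1}}
\]
when $d-1 \le m_{i-1}$, with an analogous $1+s_i/2$ bound otherwise.

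Fix $c^\star = \lceil 3\log m/\log\log m\rceil$ and call phase $i$ \emph{shrinking} if $s_i \le m_{i-1}/c^\star$, else \emph{pointy}. A shrinking phase drops $\log m_i$ by at least $\log c^\star = \Theta(\log\log m)$; since $\log m_i$ starts at $\log m$ and stays nonnegative while the process is active, the number of shrinking phases is \emph{deterministically} at most $\log m/\log c^\star = O(\log m/\log\log m)$. A pointy phase has $s_i/m_{i-1} > 1/c^\star$, so the bound above gives $\mathbb{E}[q_i \mid \mathcal{F}_{i-1}] = \Omega(d/c^\star)$, after handling the small-$m_{i-1}$ branch (in which the urn anyway empties within $O(c^\star)$ further phases); thus accumulating points over $T_p$ pointy phases has conditional expectation $\Omega(T_p d/c^\star)$, which exceeds $d$ once $T_p = \Omega(c^\star)$.

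For the final step, I split on $d$: if $d \le 4c^\star$, then the trivial inequality $T_p \le \sum_i q_i < d$ already gives $T_p = O(\log m/\log\log m)$ deterministically; otherwise I would apply a martingale Chernoff bound (e.g.\ Freedman's inequality) to $\sum_{i \le T^\star}(q_i - \mathbb{E}[q_i \mid \mathcal{F}_{i-1}])$ at a fixed horizon $T^\star = C\log m/\log\log m$, using that each $q_i$ is a sum of negatively-associated indicators bounded by $d$, to get $\Pr\!\left[\sum q_i < d\right] \le e^{-\Omega(c^\star)} \le 1/20$. Combining with the deterministic shrinking-phase bound and union-bounding over the two failure modes delivers $O(\log m/\log\log m)$ phases with probability at least $9/10$. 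The main obstacle I anticipate is that the adversary's adaptive choices make the shrinking/pointy classification itself random; the clean workaround is to run the argument at a fixed horizon $T^\star$ and observe that ``process still active at $T^\star$'' simultaneously forces the shrinking-count to exceed its deterministic maximum (impossible) or the cumulative points to fall below half their conditional expectation (ruled out by Chernoff).
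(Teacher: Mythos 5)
Your high-level dichotomy---each phase either shrinks the urn by a $\Theta(\log\log m)$ factor or earns $\Omega(d\log\log m/\log m)$ points, and each branch can occur only $O(\log m/\log\log m)$ times---is in essence the paper's as well, with your $1/c^\star$ playing the role of the paper's threshold $\gamma=\log\log m/\log m$. But your execution has genuine gaps. First, the reduction to a ``canonical'' adversary is asserted rather than proved, and the justification offered (the adversary ``wants to minimize $\mathbb{E}[q_i]$'') is not the adversary's objective: it wants to maximize the number of phases, which requires trading off points earned against balls removed, and the natural coupling between delayed and immediate lead-ification does not preserve the end-of-phase urn contents (a ball drawn while still gold is \emph{forced} to turn to lead, so it cannot survive under the delayed strategy even if it would under the immediate one). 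The paper sidesteps this entirely with the per-draw indicator $\mathbbm{1}_{G_i\cup L_i}$ (``gold, or the gold fraction has already dropped below $\gamma$''), whose conditional expectation is at least $\gamma$ against any adaptive adversary, followed by an explicit stochastic-domination induction; no reduction of the adversary is needed.

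Second, and more seriously, your claim that every pointy phase satisfies $\mathbb{E}[q_i\mid\mathcal{F}_{i-1}]=\Omega(d/c^\star)$ fails precisely in the branch you defer: when $m_{i-1}<d-1$ your own bound gives only $1+s_i/2\ge 1+m_{i-1}/(2c^\star)$, which is far below $d/c^\star$ once $m_{i-1}\ll d$, and the parenthetical fix ``the urn anyway empties within $O(c^\star)$ further phases'' is false as stated: at $m_{i-1}=d/2$, say, a phase only guarantees a constant-factor \emph{expected} shrinkage, so a priori $\Theta(\log m)$ further phases. Rescuing this regime needs a separate accelerating-shrinkage argument (from $\mathbb{E}[m_i]\le s_i e^{-(d-1)/m_{i-1}}$, the shrinkage factor improves as the urn empties) together with its own concentration. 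Third, the concentration step as described does not close: Freedman's inequality with increments bounded by $d$ and target deviation $\Theta(d)$ yields only an $e^{-O(1)}$ failure probability; you must instead use the multiplicative lower-tail Chernoff bound for sums of negatively associated indicators (via a stopped exponential supermartingale) so that the increment size is irrelevant. None of these is unfixable, but each is a real missing piece; the paper's route---a per-phase ``success'' indicator with conditional probability at least $1/2$, then a Chernoff bound on the number of successes---handles the adaptivity and all regimes of $m_{i-1}$ uniformly and is considerably cleaner.
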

%
The intuition behind the result is as follow. The observation is that if the fraction of balls in the urn remains above $d'/d$ then each draw is gold with probability at least $d'/d$ and it would be reasonable to expect at least $d'/d \times d=d'$ of the draws to be gold.
Define $m(i)$ to be number of gold balls left in the urn after $i-1$ rounds and let  $d(i)$ be the number of gold balls observed in the $i$-th round. 
For the sake of intuition (the formal proof follows shortly) suppose the observation holds for all rounds even though $d(i)$ and $m(i+1)$ are dependent, i.e.,
$m(i+1)/m(i) \leq  d(i)/d$ for all~$i$;
 then 
\[\frac{m(i+1)}{m} = \prod_{j=1}^{i+1}\frac{m(j)}{m(j-1) } \leq \prod_{j=1}^{i}\frac{d(j)}{d} \leq  \left (\frac{(\sum_{j=1}^i d(j)/d)}{i} \right )^i\leq 1/i^i\,,\] by the AM-GM inequality and the fact that every drawn gold ball is turned into lead, so only contributes to at most one term~$d(j)$, hence $(\sum_j d(j)/d)\leq 1$.
Consequently, $i$ can be at most $O(\log m/\log \log m)$ before $m(i+1)$ becomes zero. 

Our result is stronger than analogous results in previous work in two specific ways. Previous analyses essentially need a) an factor $O(\log m)$ in the number of draws taken in each phase and b) only establish that $O(\log m)$ phases suffice, rather than $O(\log m/\log \log m)$ phases. The improvements in our approach stem from avoiding the need to guarantee that each phase makes progress with high probability and considering both the progress made towards observing $d$ gold balls and the progress made in terms of turning balls into lead.

\begin{proof}[Proof of \Cref{thm:urnrounds}]
  Let $m_j$ be the number of balls in the urn after the $j$th phase ends and
  all lead balls are removed. Also, let $m_0 = m$. Put
  \begin{align} \label{eq:gamma-def}
    \gamma = \log \log m/\log m
  \end{align}
  and assume $d\geq 12/\gamma$, otherwise the result follows trivially, since
  each phase earns at least one point. The $j$th phase is deemed
  \emph{successful} if, during it, either $\gamma d/2$ points are earned or
  the fraction of gold balls drops below $\gamma$ (causing $m(j)/m(j-1) <
  \gamma$). In \Cref{lem:phasesuc1,lem:phasesuc2}, we will establish that each
  phase is successful with probability at least $1/2$, even when conditioned
  on the outcomes of previous phases. Thus, by a Chernoff bound, after
  $10/\gamma$ phases the probability that there were at least $4/\gamma$
  successful phases is at least $1-\exp(-O(1/\gamma))$.  If $4/\gamma$ phases
  are successful, either we have earned $(2/\gamma)(\gamma d/2)$ points or the
  number of balls in the urn has reduced from $m(0) = m$ to at most
  $(\gamma)^{2/\gamma} m$. By \cref{eq:gamma-def}, this number is less than
  $1$, implying that the urn is empty.
\end{proof}

We turn to lower bounding the success probability of a phase as claimed in the
above proof. Fix a particular phase and a particular realization of all
previous phases of the stochastic process. Let $G_i$ be the event that the
$i$th draw of this phase reveals a gold ball (and thus earns a point); let
$L_i$ be the event that after $i-1$ draws, the fraction of gold balls in the urn 
is less than $\gamma$. Define indicator variables $X_i = \mathbbm{1}_{G_i \cup
L_i}$ and put $X=\sum_{i=1}^d X_i$. Then,
\begin{align} \label{eq:xi}
  \EE[X_i] = \Pr[L_i] + \Pr[G_i \cap \overline{L}_i] 
  \ge \Pr[G_i \mid \overline{L}_i] 
  \ge \gamma \,.
\end{align}

\begin{lemma}\label{lem:phasesuc1}
  If $X \ge \gamma d/2$, then the phase is successful.
\end{lemma}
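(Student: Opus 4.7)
The plan is to establish the contrapositive: assume the phase is \emph{unsuccessful} and deduce that $X<\gamma d/2$. By the definition of a successful phase, the assumption means that (i) fewer than $\gamma d/2$ gold balls are drawn during the phase, and (ii) the fraction of gold balls in the urn never falls below $\gamma$ at any point during the phase.

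The key structural observation I would invoke is a simple monotonicity property of the gold-fraction \emph{within} a single phase. Since lead balls are only swept out at the end of a phase, the total number of balls in the urn remains constant throughout. Meanwhile, the number of gold balls is non-increasing, because the only way a gold ball disappears is by being converted to lead. Hence, if the gold-fraction is at least $\gamma$ at the end of the phase (as forced by assumption (ii)), it was at least $\gamma$ at every intermediate moment as well. Consequently the event $L_i$ fails for every $i\in[d]$.

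With every $L_i$ failing, the indicators simplify to $X_i=\mathbbm{1}_{G_i}$, and thus $X=\sum_{i=1}^d\mathbbm{1}_{G_i}$ counts exactly the number of gold balls drawn during the phase. Assumption (i) then gives $X<\gamma d/2$, contradicting $X\ge\gamma d/2$. This completes the contrapositive.

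The only nontrivial point, and the one I would be most careful to state cleanly, is the monotonicity of the gold-fraction within a phase — it relies on remembering that lead balls accumulate during the phase and are removed only at its boundary, so the denominator is frozen while the numerator only decays. Everything else is book-keeping, so I do not expect a genuine obstacle here; the lemma is essentially a deterministic decoding of the definition of $X$ against the definition of success.
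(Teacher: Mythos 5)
Your contrapositive argument is correct and rests on the same key observation as the paper's direct proof: within a phase the total ball count is frozen and gold only turns to lead, so the gold fraction is monotone non-increasing, which is exactly what links the events $L_i$ to the end-of-phase ratio $m(j)/m(j-1)$. The paper simply runs the identical case analysis forward ($X \ge \gamma d/2$ forces either $\gamma d/2$ occurrences of $G_i$ or some occurrence of $L_i$, each of which certifies success), so your proof is essentially the same argument.
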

\begin{proof}
  If $X \ge \gamma d/2$, either some event $L_i$ occurs or we collect $\gamma
  d/2$ gold balls (and points). In the latter case, the phase is successful by
  definition. In the former case, the fraction of gold balls drops below
  $\gamma$ during the phase. Since lead never turns into gold, this fraction
  remains below $\gamma$. Upon removing lead balls at the end of the phase,
  the number of balls in the urn drops to at most $\gamma$ times the number at
  the start of the phase, i.e., the phase is successful.
\end{proof}

Finally, we show that $X$ stochastically dominates a binomial random variable,
which then lower-bounds $\Pr[X\geq \gamma d/2]$. Denote Bernoulli and binomial
distributions as ``Bern'' and ``Bin.''

\begin{lemma}\label{lem:phasesuc2}
  $\Pr[ X\geq \gamma d/2] \geq 1-\exp(-\gamma d/12) \geq 1/2$, where the latter
  inequality uses our assumption that $d\geq 12/\gamma$.
\end{lemma}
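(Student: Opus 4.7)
The plan is to reduce the claimed tail bound to a standard Chernoff bound by showing that $X$ stochastically dominates a binomial random variable $\bin(d,\gamma)$. The $X_i$'s are not independent (since drawing a gold ball on step $i$ may turn other balls to lead, affecting later draws), so the argument cannot simply invoke a Chernoff bound on the $X_i$'s directly. However, the cleaner observation is that \emph{conditional on every possible history} of the first $i-1$ draws, the probability that $X_i = 1$ is at least $\gamma$: indeed, after fixing the history, either the event $L_i$ has occurred (in which case $X_i=1$ deterministically) or the fraction of gold balls is at least $\gamma$ and the $i$th draw, taken uniformly with replacement, lands on gold with probability at least $\gamma$, forcing $G_i$ and hence $X_i=1$.

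Armed with this conditional lower bound, I would construct a standard coupling: draw i.i.d.\ uniform random variables $U_1,\ldots,U_d$ on $[0,1]$, independently of the urn process, and define $Y_i = \mathbbm{1}[U_i \le \gamma]$, so that $\sum_i Y_i \sim \bin(d,\gamma)$. Since $\Pr[X_i = 1 \mid \text{history}_{i-1}] \ge \gamma$ for every realization of the history, one can couple $X_i$ and $Y_i$ (step by step, using $U_i$ as the auxiliary randomness for the urn draw as well) so that $X_i \ge Y_i$ almost surely. Summing gives $X \ge \bin(d,\gamma)$ in the stochastic-dominance sense, and hence
\[
  \Pr[X \ge \gamma d/2] \;\ge\; \Pr[\bin(d,\gamma) \ge \gamma d/2] \,.
\]

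Next I would apply the standard multiplicative Chernoff lower-tail bound to $\bin(d,\gamma)$, which has mean $\mu = \gamma d$. Taking deviation parameter $\delta = 1/2$ gives
\[
  \Pr[\bin(d,\gamma) < \gamma d / 2] \;\le\; \exp(-\delta^2 \mu / 3) \;=\; \exp(-\gamma d / 12) \,,
\]
which establishes the first inequality. For the second inequality, I would simply substitute the standing assumption $d \ge 12/\gamma$ stated in \Cref{lem:phasesuc2}: this yields $\gamma d \ge 12$, so $\exp(-\gamma d/12) \le e^{-1} \le 1/2$.

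The only real subtlety is the coupling step: one must be careful that the urn process is adapted to a filtration in which the conditional success probability is genuinely at least $\gamma$ at every step, which is exactly what the case analysis on $L_i$ versus $\overline{L}_i$ guarantees. Everything else is a direct application of Chernoff. I do not expect the computation to be difficult, and the constants ($1/12$, the requirement $d \ge 12/\gamma$) arise naturally from choosing $\delta = 1/2$ in the lower-tail Chernoff bound.
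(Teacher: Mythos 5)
Your proposal is correct and follows essentially the same route as the paper: both arguments reduce the claim to showing that $X$ stochastically dominates $\Bin(d,\gamma)$, using the fact that $\Pr[X_i=1\mid \text{history}]\geq\gamma$ (via the case split on $L_i$ versus $\overline{L}_i$), and then apply the lower-tail Chernoff bound with $\delta=1/2$. The only cosmetic difference is that you establish dominance by an explicit step-by-step coupling with i.i.d.\ uniforms, whereas the paper proves the equivalent statement by induction on the tail probabilities of the partial sums $X^{\leq j}$ and $Y^{\leq j}$.
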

\begin{proof}
  Let $Y_1, \ldots, Y_d$ be independent draws from $\Bern(\gamma)$. Put
  $X^{\leq j}=\sum_{i=1}^j X_i$ and $Y^{\leq j}=\sum_{i=1}^j Y_i$. We first
  show by induction on $j$ that $X^{\leq j}$ stochastically dominates $Y^{\leq
  j}$ for all $j$. Inequality~\eqref{eq:xi} with $i=1$ establishes the base
  case of $j=1$. Assuming the result for $j$, for an arbitrary integer, $z$,
  we have
  \begin{align*}
  & \Pr[X^{\leq j+1}\geq z] \\
  &=   \Pr[X^{\leq j}\geq z] + \Pr[X^{\leq j}= z-1] \Pr[X_{j+1}=1 \mid X^{\leq j}= z-1] \\
  &\ge \Pr[X^{\leq j}\geq z] + \Pr[X^{\leq j}= z-1] \gamma \\
  &=   \Pr[X^{\leq j}\geq z] (1-\gamma) + \Pr[X^{\leq j}\geq  z-1] \gamma \\
  &\ge \Pr[Y^{\leq j}\geq z] (1-\gamma) + \Pr[Y^{\leq j}\geq  z-1] \gamma \\
  &=   \Pr[Y^{\leq j+1}\geq z] \ .
  \end{align*}
  Thus, $\Pr[ X\geq \gamma d/2]\geq \Pr[ Y\geq \gamma d/2]$. The lemma now
  follows by applying a Chernoff bound to $Y \sim \Bin(d,\gamma)$.
\end{proof}

\subsection{Cascading Urns}

To prove the first part of Lemma \ref{lem:dyn-iters}, we will also need to analyze a more complex version of the above process; we
call it the {\em cascading urn process}.  Now we have~$t$ urns, numbered $1$
through $t$, with the $r$th urn starting out with $m_r$ balls; let $m =
\sum_{r\in [t]} m_r$.  Drawing a gold ball from urn~$r$ is worth~$d/2^r$
points. Initially, all balls are gold. The process works as follows.
\begin{itemize}
  \item Balls are drawn in phases: a phase consists of drawing $2^r$ balls
  from urn~$r$, in order of increasing $r$. 
  \item When a ball is drawn from urn~$r$, if it is gold, the following things
  happen: (i)~it turns into lead and is returned to urn~$r$; (ii)~some
  arbitrary subset of the balls in the urns (all urns, not just urn~$r$) turn
  into lead; and (iii) we earn $d/2^r$ points.
  \item At the end of each phase, each lead ball in an urn 
  is removed from its current urn and either destroyed or transformed back
  into gold and placed in a higher-numbered urn. We then start the next phase
  with all balls being gold again.
\end{itemize}
The process ends when either $d$ points have been earned or all urns have
become empty.

\begin{theorem}\label{thm:cascaderounds}
  With probability at least $9/10$, the cascading urn process ends within $O(t
  + \log m)$ phases.
\end{theorem}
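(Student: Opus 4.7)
The plan is to generalize the analysis of the single urn process to the cascading setting by tracking two distinct forms of progress in each phase: points earned and a potential function that measures how much structural work remains in the urn system. The key observation is that balls only cascade upward, so each of the $m$ balls can visit at most $t$ urns, giving a total ``ball-urn-visit'' budget of $mt$; meanwhile the point budget is $d$. I would define a notion of successful phase so that each success consumes a $1/(t+\log m)$ fraction of one of these two budgets.

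First, I would carry out a per-urn analogue of \Cref{lem:phasesuc1,lem:phasesuc2}. Fix a phase and an urn $r$, and for $i\in[2^r]$ define $X_{r,i} = \mathbbm{1}_{G_{r,i}\cup L_{r,i}}$, where $G_{r,i}$ is the event that the $i$-th draw from urn $r$ in this phase is gold, and $L_{r,i}$ is the event that the fraction of gold in urn $r$ has fallen below $\gamma$ by that point. Since every draw from urn $r$ hits a gold ball with probability at least the gold fraction in urn $r$ at draw time, and since adversarial lead-turning from draws in \emph{other} urns only decreases that fraction, we still have $\E[X_{r,i}]\ge \gamma$. The stochastic-dominance argument in \Cref{lem:phasesuc2} then shows that $\sum_i X_{r,i}$ dominates $\Bin(2^r,\gamma)$, so with $\gamma = 1/2$, a Chernoff bound gives $\sum_i X_{r,i} \ge 2^{r-2}$ with probability at least $1/2$ for all $r\ge 5$. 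Call such a phase \emph{$r$-successful}: in the spirit of \Cref{lem:phasesuc1}, it either earns at least $d/8$ points in urn $r$ or has more than half the balls of urn $r$ turn to lead (and thus leave urn $r$ at the end of the phase). The $O(1)$ urns with $r<5$ are handled directly, using only the fact that the first draw from any non-empty urn is always gold.

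Next, I would set up the global potential
\[
  \Phi_j \;=\; \sum_{r=1}^{t} n_r^{(j)}\,(t+1-r),
\]
where $n_r^{(j)}$ is the number of balls in urn $r$ at the start of phase $j$. Every lead ball leaving urn $r$ at the end of a phase decreases $\Phi$ by at least $1$ (whether destroyed or moved to a higher urn), so $\Phi$ is non-increasing and $\Phi_0 \le mt$. A phase that is $r$-successful in the ``lead-heavy'' way drops $\Phi$ by at least $n_r^{(j)}/2$. I would then declare a phase \emph{globally successful} if either it earns at least $d/(C(t+\log m))$ points in total or it decreases $\Phi$ by at least $mt/(C(t+\log m))$, for a suitable constant $C$. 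The per-urn success events aggregate to show that each phase is globally successful with probability at least $1/2$. Because the point budget is $d$ and the potential budget is $mt$, the number of globally successful phases is at most $O(t+\log m)$; a Chernoff bound over the sequence of phases then gives the desired $9/10$ probability.

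The hardest part will be the aggregation step: extracting a \emph{global} success guarantee from the collection of per-urn success events, since the adversary can concentrate its lead-turning on urns with few balls to minimize the potential drop, while simultaneously distributing points across urns so that no single urn reaches the $d/8$ threshold. Overcoming this will require an averaging argument that, on any phase that fails to be globally successful, no urn can be $r$-successful in either the point-heavy or the lead-heavy way, contradicting the independent per-urn success probabilities established in the first step.
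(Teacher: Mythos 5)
Your per-urn step is sound and matches the paper's: with $\gamma$ set to a constant, each draw from urn~$r$ is gold or witnesses the gold fraction of urn~$r$ dropping below $\gamma$, the stochastic-dominance argument carries over verbatim, and a Chernoff bound gives a constant per-urn success probability. The two-budget philosophy (points vs.\ structural progress) is also the right frame. The gap is in the step you yourself flag as hardest, and it is not a technicality: the linear potential $\Phi_j=\sum_r n_r^{(j)}(t+1-r)$ combined with an \emph{additive} per-phase threshold of $mt/(C(t+\log m))$ cannot work. Once the urns have emptied down to $\hatm\ll m/(C(t+\log m))$ balls, even destroying every remaining ball decreases $\Phi$ by at most $\hatm t$, which is below your threshold; and a lead-heavy success in an urn with few balls decreases $\Phi$ by only $n_r/2$, which can be $O(1)$. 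So a phase can be $r$-successful for \emph{every} $r$ and still fail your global-success test, which defeats the contradiction your averaging argument relies on. Moreover, even switching to a multiplicative threshold does not rescue the linear weights: a ball moving from urn $r$ to urn $r+1$ reduces its contribution from $t+1-r$ to $t-r$, a relative loss of only $1/(t+1-r)$, so a constant fraction of balls cascading up need only shrink $\Phi$ by a $\Theta(1/t)$ factor, yielding $O(t\log(mt))$ phases rather than $O(t+\log m)$.

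The paper resolves both issues at once. First, the aggregation is done by weighting the per-urn success indicators $Z_r$ by the ball counts $\hatm_r$: setting $Z=\sum_r(1-Z_r/2)\hatm_r$, the per-urn success probabilities give $\EE[Z]\le(1-2\lambda)\hatm$, and Markov's inequality shows that with constant probability either some urn earns $\Theta(d)$ points (which can happen only $O(1)$ times) or a constant fraction of \emph{all} currently remaining balls turn to lead. This sidesteps your worry about the adversary concentrating lead-turning in small urns, because the guarantee is stated as a fraction of the total ball population, not per urn. Second, the potential uses exponential weights, $Q_p=\sum_r 2^{t-r}m_{r,p}\le 2^{t-1}m$, so that a ball that is destroyed or moved to a higher-numbered urn loses at least half its contribution; a constant fraction of balls turning to lead then shrinks $Q$ by a constant factor, and $O(\log(2^{t-1}m))=O(t+\log m)$ multiplicative drops exhaust it. If you replace your $\Phi$ by this $Q$ and your per-urn aggregation by the Markov step, your outline becomes the paper's proof.
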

\begin{proof}%
  Fix a particular phase. For each $r \in [t]$, the analysis of one phase of
  the single urn process applied to urn~$j$ establishes that, with probability
  at least 
  \[
    \Pr\left[\Bin\left(2^r, \frac12\right) \geq \frac{2^r}{4}\right]
    \geq 1-\exp \left (\frac{-2^{r-1}}{12}\right) 
    \geq 1-e^{-\frac{1}{12}}
    > \frac{1}{13} =: 4\lambda \,,
  \]
  either at least $d/2$ points are earned from this urn (call this event
  $G_r$) or the fraction of gold balls in this urn drops below $1/2$ (call
  this event $L_r$). Urn~$r$ is deemed \emph{successful} in this phase if $G_r
  \cup L_r$ occurs. Put $Z_r = \mathbbm{1}_{G_r \cup L_r}$ and note $\EE[Z_r]
  > 4\lambda$. 

  Let $\hatm_r$ be the number of balls in urn~$r$ at the start of the phase
  we are considering (recall that these are all gold balls) and let $\hatm =
  \sum_{r \in [t]} \hatm_r$.  If $\bigcup_r G_r$ does not occur, then at the
  end of the phase, the number of gold balls across all urns is at most
  \[
    Z := \sum_{r\in [t]} (Z_r \hatm_r/2 + (1-Z_r) \hatm_r) 
    = \sum_{r\in [t]} (1-Z_r/2) \hatm_r \,.
  \]
  Using the lower bound on $\EE[Z_r]$, we get 
  \[
    \EE[Z] \leq \sum_{r\in [t]} (1-2\lambda) \hatm_r = (1-2\lambda)m \,.
  \]
  Applying a Markov bound gives
  \[
    \Pr[Z \leq (1-\lambda)m] \geq 1-\EE[Z]/((1-\lambda)m) = \lambda/(1-\lambda) \,.
  \]
  Hence, with probability at least $\lambda/(1-\lambda)$, either $\bigcup_r
  G_r$ occurs (in which case we collect $d/2$ points) or the fraction of
  balls that remain gold until the end of the phase is at most $1-\lambda$.

  Let $m_{r,p}$ denote the number of balls in urn~$r$ after the $p$th phase
  ends (and all lead balls are either reconverted to gold or destroyed). Also,
  let $m_{r,0} = m_r$. Consider the quantity
  \[ 
    Q_p := \sum_{r \in [t]} 2^{t-r} m_{r,p} \,.
  \]
  Note that $Q_0 \leq 2^{t-1} m$ and if $Q_p < 1$ for some $p$, then the urns
  must be empty after phase~$p$. We claim that if the fraction of balls that
  remain gold in phase $p$ is at most $1-\lambda$, then  $Q_{p}\leq (1 -
  \lambda/2) Q_{p-1} $. This claim holds because (a)~after moving a ball to a
  higher-numbered urn or destroying it, the contribution of that ball to $Q_p$
  goes down by a factor of $2$ or more; and (b)~at least a $\lambda$ fraction
  of the balls turn to lead and are therefore either moved or destroyed. Thus,
  at least a $\lambda$ fraction of the quantity $Q_p$ is reduced to $\lambda
  Q_p/2$ by the end of the phase, proving the claim.

  Hence, with probability at least $\lambda/(1-\lambda)$, in each phase, we
  either collect $d/2$ points or $Q_p$ reduces by a factor $\mu :=
  1-\lambda/2$. By a Chernoff bound, after $p = O(\log Q_0) = O(\mu^{-1} \log
  Q_0)$ phases, either we have collected more than $2\cdot d/2=d$ points or
  $Q_p < 1$, implying that all the urns are empty.
\end{proof}

\subsection{Applications to Our Algorithm} \label{sec:urn-applic}

We now circle back to \Cref{alg:dyn}. To finish its analysis, we need to prove
\Cref{lem:dyn-iters}, which we can now do, as an application of what we have
established about these urn processes.

To analyze the while loop at \cref{line:cascade}, apply
\Cref{thm:cascaderounds} with $t = \rounddown{\log k}$ as follows. The balls
in the $i$th urn correspond to the sets in $\cF_{i}^C$. A ball is {\em gold}
if the corresponding set $S$ satisfies $|S\setminus C| \geq \theta_i$;
otherwise, it is {\em lead}. Adding a set $S$ to the candidate solution $Y$
grows the set $C$ of covered elements, thereby turning gold balls to lead in
some complicated way that the algorithm cannot easily track. The bound on the
number of phases until the urn process terminates translates to a bound of
$O(\log k + \log m)$ on the number of iterations of that while loop. Noting that $k \le m$ gives \Cref{lem:dyn-iters}.

Similarly, \Cref{thm:urnrounds} implies that the number of iterations of the
while loop at \cref{line:one-urn} is $O(\log m/\log \log m)$. This completes
the analysis.

\section{Random Order Model}
\label{sec:random}
In this section, we consider the  stream to be a random permutation of the collection of sets.
We will show that it is possible to approximate the maximum coverage problem up to a factor $1-1/e-\epsilon$ using $O_\epsilon(k \log m)$ space in a single pass. As mentioned in the introduction, our approach is based on modifying an algorithm by Agrawal et al.~\cite{AgrawalSS19} for cardinality constrained monotone submodular maximization. This is a more general problem than maximum coverage, but their algorithm assumes oracle access to the function being optimized. For maximum coverage we need to store enough information to be able to simulate this oracle. A $O_\epsilon(k^2 \log m)$-space algorithm for maximum coverage follows immediately via the universe subsampling technique discussed in Section~\ref{sec:prelim} since we may assume that every set has size $O_\epsilon(k \log m)$.
To reduce the dependence on $k$ to  linear rather than quadratic, we need a more careful adaption of the algorithm in Agrawal et al.~\cite{AgrawalSS19}, to ensure that the algorithm can be implemented via maintaining a set of size $O_\epsilon(k \log m)$ corresponding to the items covered by a collection of sets chosen for the solution so far, along with $O_\epsilon(1)$ sets of size $O_\epsilon(\log m)$.

\subsection{Warm-Up} To motivate the approach, consider a variant of the problem in which an algorithm is presented with a random permutation of the input sets, but we make the following two changes:
\begin{enumerate} 
\item For a fixed optimal solution consisting of sets $O_1, \ldots, O_k$, replace the occurrences of these sets in the stream by a set drawn uniformly at random from $\{O_1, \ldots, O_k\}$.
\item We give the algorithm foreknowledge of a way to segment the stream into $k$ contiguous sub-streams $\cC_1, \ldots, \cC_{k}$ such that each contains exactly one element from $\{O_1, \ldots, O_k\}$. 
\end{enumerate}
Then consider the greedy algorithm that outputs the sequence $\langle S_1, \ldots, S_k\rangle$  where \[S_j = \argmax_{S \in \cC_j} |S \setminus (S_1 \cup \cdots \cup S_{j-1})| \ . \]
To analyze this algorithm, define \[u_j=\opt - |S_1 \cup \cdots \cup S_{j}| ~\mbox{  and }~F_j=\frac{|S_j \setminus (S_1 \cup \cdots \cup S_{j-1})|}{u_{j-1}}\,,\]
and note that 
\[|S_1 \cup \cdots \cup S_k|=\opt - u_k=\opt - \opt \cdot \prod_{j=1}^k ( 1-F_j) \ \]
since for all $j\in [k]$,
\[
u_j=\opt - |S_1 \cup \cdots \cup S_{j-1 }|- |S_j \setminus (S_1 \cup \cdots \cup S_{j-1})|
=(1-F_j) u_{j-1}   \ . \]
Note also that 
\begin{align*}
\expec{F_j \mid S_1, \ldots, S_{j-1}} 
& \ge
 \sum_{i=1}^k \frac{|O_i \setminus (S_1 \cup \cdots \cup S_{j-1})|}{u_{j-1}} \cdot  \Pr[O_i \in \cC_j]\\
&= 
\frac{1}{k} \cdot \sum_{i=1}^k \frac{|O_i \setminus (S_1 \cup \ldots \cup S_{j-1})|}{u_{j-1}} \geq 1/k \ . 
\end{align*}
Hence, \[\EE[|S_1 \cup \ldots \cup S_k|]=\opt(1-\EE[\prod_{j=1}^k ( 1-F_j)])\geq 1-(1-1/k)^k\rightarrow
1-1/e \ .\]
The space required to implement this algorithm is $O(k \epsilon^{-2} \log m)$ since it suffices to maintain the union of the sets chosen thus far rather than the sets themselves.

To handle the original random order setting, we obviously need to deal with the fact that the algorithm does not have foreknowledge of a segmentation of the stream such that each segment contains exactly one set from the optimum solution. A naive approach would be to segment the stream into $\beta k$ contiguous segments for some large constant $\beta>0$ with the hope that few pairs of optimum sets appear in the same segment. We could then consider all $\binom{
\beta k}{k}=e^{O(k)}$ subsequences of these segments but this is clearly inefficient.
A better approach is to use limited number of guesses in parallel using an approach by Agrawal et al.~\cite{AgrawalSS19}. More complicated to analyze, especially when combined with the goal of using limited space, is the fact that because the~$O$-sets appear randomly permuted, rather than independently sampled, there are now various dependencies to contend with.

\subsection{General Setting} \label{sec:random-order-alg}

We will randomly partition the input collection $\cC$ of sets into $k\beta$ groups $\cC_1, \ldots, \cC_{k\beta}$. These groups will ultimately correspond to segments of the stream, i.e., the first $|\cC_1|\sim Bin(m,1/(k\beta))$ sets define $\cC_1$ etc. For any  \[\cI=\{i_1, i_2, \ldots\} \in \{1, \ldots, k\beta\} ~,\] we define a sequence of groups
\[
\Sigma_\cI=\langle \cC_{i_1}, \cC_{i_2}, \cdots , \cC_{i_{|\cI|}} \rangle
~~\mbox{ where $i_1<i_2<\ldots < i_{|\cI|}$} \ . \]

At the heart of the algorithm is  a greedy process run on such sequences. For
$\cI=\{i_1, i_2, \ldots\} \in \{1, \ldots, k\beta\}$, $C \subset [n]$, and a
``reserve'' collection of sets,~$\cR$---the name will become clear later---define the greedy sequence:
\begin{equation} \sigma_\cI(C,\cR)=\langle S_1, S_2, \ldots, S_{|\cI|} \rangle \label{greedyseq}
\end{equation}
where \[S_j=\argmax_{S \in \cR \cup \cC_{i_j}} |S\setminus (C \cup S_1 \cup \cdots \cup S_{j-1})| \ .\]

\begin{algorithm}[!htbp]
\caption{Max Coverage for Random Order Streams}
\label{alg:rand}
\begin{algorithmic}[1]
\State initialize $\cR\leftarrow \emptyset$, $C\leftarrow \emptyset$, $k'\leftarrow 0$.
\For{$i \gets 1$ to $k/\alpha$}
   \State Compute $\sigma_\cI(C,\cR)$ for all $\cI\in P_i$, where $P_i$ consists of all size $k^+ \equiv \min\{\alpha, k-k'\}$ subsets of \[W_i=\{\alpha \beta (i-1)+1, \ldots, \alpha \beta i\} \ .\] We will subsequently refer to $W_i$ as the $i$th \emph{window}.\label{windowline}
   \State Let \[\cI^*=\argmax_{\cI \in P_i} \left  |~ C\cup \left (\bigcup_{S\in  \sigma_{\cI}(C,\cR)} S\right ) ~ \right | \] and update
\[
C\leftarrow C\cup \left (\bigcup_{S\in \sigma_{\cI^*(C,\cR)}} S \right ) 
~~\mbox{ and }~~ k'\leftarrow k'+k^+ \ .
\]
\State For all $\cI\in P_i$, add all sets in $\sigma_\cI(C,\cR)$  to $\cR$.  If there exists a set $S$ in $\cR$ such that $|S\setminus C|\geq \opt/k$, select such a set uniformly at random and  update
\[
C\leftarrow C\cup S ~~\mbox{ and }~~ k'\leftarrow k'+1 \ .
\]
Repeat until either $k'=k$ or no more such sets exist in $\cR$. 
\EndFor
\end{algorithmic}
\end{algorithm}

Algorithm~\ref{alg:rand} contains most details,
but to fully specify the algorithm, we need to define the original groups $\cC_1, \cC_2, \ldots, \cC_{k\beta}$. To do this, for each set $S$ define a random variable $Y_S$ that is uniform over $\{1, 2, \ldots, k\beta\}$. Then, we define $\cC_i=\{S: Y_S=i\}$. Equivalently, for a random permutation of the stream $\cC_1$ is the first $|\{S: Y_S=1\}|$ sets in the stream, $\cC_2$ is the next $|\{S: Y_S=2\}|$ sets in the stream, etc. Given this definition of the stream, Algorithm~\ref{alg:rand}   maintains sets whose total size is \begin{equation}|\opt | + \alpha \binom{\alpha \beta}{\alpha} |\opt | + (k/\alpha) \binom{\alpha \beta}{\alpha} \opt  /k = O_{\alpha,\beta} ( \opt)\,, \label{eq:randsize}\end{equation}
where the first term corresponds to maintaining $C$, the next term consists of all sets appearing in greedy subsequences during the processing of each window (defined in line~\ref{windowline} of Algorithm~\ref{alg:rand}), and the last term corresponds to storing sets in $\cR$.
For sets $S$ in $\cR$ it suffices to store $S\setminus C$ rather than $S$ itself. Hence, the total number of elements in sets in $\cR$ is at most $|\cR| \opt/k$.

The motivation for how to set $\alpha$ and $\beta$ is as follows. Let $\cO=\{O_1, \ldots, O_k\}$ be an optimum collection of $k$ sets.  We say a group  $\cC_i$  is \emph{active} if it contains  a set from $\cO$ and  note if $\beta$ is large enough  the number of active groups $\kappa$  is close to $k$. Furthermore, in each window the number of active groups is expected to be $\kappa \cdot (\alpha \beta)/(k\beta)=\alpha \kappa/k \approx \alpha$. We will later set $\alpha$ to be large enough such that the number of active groups in each window is typically close to $\alpha$, i.e., there is an $\alpha$-length subsequence of the window that mainly consists of active groups. 

\subsubsection{Analysis}

The analysis in Agrawal et al.~\cite{AgrawalSS19} establishes that, during the $i$th window, the algorithm finds a collection of $\alpha$ sets $S_1, \ldots, S_\alpha$ -- via one of the greedy subsequences considered -- that, as explained below, covers a significant number of new elements when added to the solution. For any subset $A\subset [n]$, define \[
u(A)=(1-\alpha/k)\opt-|A|\,.
\]
Lemma~15 of Agrawal et al.~\cite{AgrawalSS19} establishes that if $C_{i-1}$ is the set of elements covered by the sets chosen during the processing of the first $i-1$ windows, and $T_{i-1}$ is the entire set of greedy subsequences constructed during the first $i-1$ windows, then\footnote{Note the  statement of  \cite[Lemma 15]{AgrawalSS19} has the RHS of Eq.~\eqref{eq:expectprogress} as $(1-\delta) e^{-\alpha/k} u(C_{i-1})$, rather than the expression in  Eq.~\eqref{eq:expectprogress}, where the $(1-\delta)$ appears in the exponent. Their proof actually only implies the weaker statement, in Eq.~\eqref{eq:expectprogress}; fortunately, this is still sufficient for their and our purposes.} 
\begin{equation}
\EE[u(S_1\cup \cdots \cup S_\alpha \cup C_{i-1}) \mid T_{i-1}] 
\leq
e^{-\alpha(1-\delta)/k} u(C_{i-1})\,, \label{eq:expectprogress}
\end{equation}
where $\delta=\sqrt{8/\beta}$, 
assuming $k\geq \alpha \beta$ and $\alpha\geq 4\beta^2 \log (\beta / 8)$. The crucial observation is that their proof of \cref{eq:expectprogress} holds true regardless of how $C_{i-1}$ is constructed from the greedy subsequences in $T_{i-1}$. In particular, Equation~\eqref{eq:expectprogress} holds true given our modification of the Agrawal et al.~algorithm, i.e., the possibility of adding additional sets from $\cR$ at the end of previous windows. At the end of the $i$th window, we potentially add additional sets $S_{\alpha+1}, \ldots, S_{\alpha'_i}$ where
\begin{align*}
&u(S_1\cup \cdots \cup S_{\alpha'_i} \cup C_{i-1}) \\
&\le u(S_1\cup \cdots \cup S_{\alpha} \cup C_{i-1}) -(\alpha'_i-\alpha)\opt/k\\
&\le (1-(\alpha'_i-\alpha)/k)  \cdot u(S_1\cup \cdots \cup S_{\alpha} \cup C_{i-1})\\
&\le e^{-(\alpha'_i-\alpha)/k} \cdot u(S_1\cup \cdots \cup S_{\alpha} \cup C_{i-1})
\end{align*}
Hence, in the $i$th window (except potentially in the window where we add the $k$th set to our solution) we add $\alpha'_i\geq \alpha$ sets $S_1, \ldots, S_{\alpha'_i}$ such that if $C_i=S_1\cup \cdots \cup S_{\alpha'}$
\begin{equation}
\EE[u(C_{i}) \mid T_{i-1}] 
\leq   
e^{-\alpha'_i (1-\delta) /k} \cdot u(C_{i-1}) \ .  \label{eq:expectprogress2}
\end{equation}
Since there are $k/\alpha$ windows, and we find a sequence of length $\alpha'_i>\alpha$ until we have a sequence of length $k$, such that Eq.~\eqref{eq:expectprogress2} holds in each window, for some $j\leq k/\alpha$, with $\delta<\epsilon/2<1$,
\begin{gather*}
  \EE[(1-\alpha/k)\opt - |C_{j}|] 
  \le (1-\alpha/k) \cdot \opt \cdot  
  \prod_{i\geq 1}e^{-\alpha'_i (1-\delta)/k} \\
  = (1-\alpha/k) e^{-1+\delta} \opt
  < (1-\alpha/k) (1/e+\epsilon/2)\opt 
\end{gather*}
where we used the fact $\sum_i \alpha'_i=k$ and $\exp(-1+x)\leq 1/e+x$ for $0<x<1$. Hence, $\EE[|C_j|]\leq 1-1/e-\epsilon$ assuming $\alpha/k \leq \epsilon/2$. Setting the constants to $\alpha = 4096\epsilon^{-4} \log (2/\epsilon)$ and $\beta=32 \epsilon^{-2}$
ensures all the necessary conditions are met assuming $k=\omega(1)$. With these values of $\alpha$ and $\beta$, we have $\binom{\alpha \beta}{\alpha} \leq \exp(O(\epsilon^{-4} \ln^2 \epsilon^{-1}))$. Hence, the space dependence on $\epsilon $ in Eq.~\ref{eq:randsize} becomes $O(k \log m\cdot \exp(O(\epsilon^{-4} \ln \epsilon^{-1})))$, since we may assume $\opt=O(k\epsilon^{-2} \log m)$ and the exponential dependencies on $\epsilon$ dominate the polynomial dependencies.

\section{Fast Algorithms}
\label{sec:fast}
\subsection{Fast Universe Sub-Sampling}

The universe sub-sampling approach described in  \Cref{sec:prelim}, requires the use of a $\gamma$-wise independent hash function where $\gamma=O(k \log m)$. This bound on the independence was actually an improvement by Jaud et al.~\cite{JaudWC23} over the   bound of $O(\epsilon^{-2} k \log m)$ originally given by McGregor and Vu~\cite{McGregorV19}. Jaud et al.~were motivated to improve the bound not for the sake of reducing the space of any streaming algorithms, but rather to reduce the update time from  $O(\epsilon^{-2} k \log m)$ to $O(k \log m)$. However, we note that the update time, at least in an amortized sense, can actually be reduced to polylogarithmic. Specifically, we can use the hash family given by degree-$\gamma$ polynomials over a suitably large finite field~\cite{WegmanC79}. While evaluating such a polynomial at a single point (which corresponds to hashing a single element) requires $\Omega(\gamma)$ time, evaluating the polynomial on $\gamma$ points can actually be done in $O(\gamma\log^2 \gamma \log \log \gamma)$ time~\cite[Chapter 10]{mda}. Hence, by buffering  sets of $O(k \log m)$ elements, we may ensure that the hash function can be applied with  only $\poly(\log k, \log \log m)$ amortized update time. We note that a similar approach was used by Kane et al.~ \cite{kane2011fast} in the context of frequency moment estimation.

\subsection{Fast $\mathbf{\ell_0}$ Sampling}
In the dynamic algorithm we needed to sample $r$ sets with replacement from a stream in the presence of insertions and deletions. This can be done via $\ell_0$ sampling as discussed in the preliminaries. However, a naive implementation of this process requires $\Omega(r)$ update time if $r$ different $\ell_0$ samplers need to be updated with every set insertion or deletion. To avoid this, a technique by McGregor et al.~\cite{McGregorTVV15} can be used.
Specifically, we use a $O(\log r)$-wise independent hash function to partition $[M]$ in $t:=r/\log r$ groups $P_1, P_2, \ldots P_{t}$. During the stream we compute \[\rho_i=|\{x \in P_i:\textstyle\sum_{j:x_j=x} \Delta_j\neq 0\}|\] 
i.e., the number of values in $P_i$ that are inserted a different number of times from the number of times they are deleted. Note that this is simple to compute $\rho_1, \ldots, \rho_t$ on the assumption that $\sum_{j:x_j=x} \Delta_j \in \{0,1\}$, i.e., every set is inserted either the same number of times it is deleted or exactly one extra time. We also compute $2\log r$ independent $\ell_0$ samplers for each $P_i$; note that each set insert or delete requires updating at most $2\log r$ independent $\ell_0$ samplers. Then at the end of the stream, to generate a sequence of samples with replacement we first sample $i$ with probability $\rho_i/\sum_j \rho_j$ and then use the next unused $\ell_0$ sampler from group $P_i$. Assuming $r\ll |\{x \in [M]:\sum_{j:x_j=x} \Delta_j\neq 0\}|$ (if this is not the case, we can just use sparse recovery), then with high  probability each $\rho_i/\sum_j \rho_j\leq 2(\log r)/r$ and the process will use at most $2\log r$ independent $\ell_0$ samplers from each group in expectation and less than $4\log r$ with probability at least $1-1/\poly(r)$.

\section{Conclusion}
\label{sec:conclusion}
We presented new algorithms for $1-1/e-\epsilon$ approximation of the maximum
coverage problem in the data stream model. These algorithms improve upon the
state-of-the-art results by a) reducing the space required in the dynamic
model from $(n+\epsilon^{-4}) \polylog(m,n)$ to $\epsilon^{-2} k
\polylog(m,n)$ when given $O(\epsilon^{-1} \log m)$ passes, b) reducing the
space required in single-pass random order model from $\epsilon^{-2} k^2
\polylog(m,n)$ to $O_\epsilon(k \polylog(m,n))$ although we emphasize that
the $O_\epsilon$ hides an exponential dependence on $\epsilon$, and c)
reducing the amortized update time to polylogarithmic in $m$ and $n$. We
conjecture that $\epsilon^{-2} k \polylog(m,n)$ space should be sufficient in
the random order model. In fact there is a monotone submodular optimization
algorithm in the random order model that, when applied to the maximum coverage
problem would maintain only  $O(\epsilon^{-1} k)$ sets \cite{LiuRVZ21}.  Given
that, via universe sub-sampling, we may assume that the sets have size
$O(\epsilon^{-2} k \log m)$; this yielded the Warneke et
al.~\cite{WarnekeCW23} result. However, it seems unlikely that the algorithm
of Liu et al.~\cite{LiuRVZ21} can be modified to ensure that the total number
of elements across all sets maintained is $O(\epsilon^{-2} k \log m)$ so a new
approach is likely to be necessary.

\bibliographystyle{alpha}
\bibliography{maxc}

\newcommand{\etalchar}[1]{$^{#1}$}
\begin{thebibliography}{NFTM{\etalchar{+}}18}

\bibitem[AB21]{assadi_beating_2021}
Sepehr Assadi and Soheil Behnezhad.
\newblock Beating two-thirds for random-order streaming matching.
\newblock In {\em 48th International Colloquium on Automata, Languages and Programming (ICALP)}, page~19, 2021.

\bibitem[ABB{\etalchar{+}}15]{anagnostopoulos_stochastic_2015}
Aris Anagnostopoulos, Luca Becchetti, Ilaria Bordino, Stefano Leonardi, Ida Mele, and Piotr Sankowski.
\newblock Stochastic query covering for fast approximate document retrieval.
\newblock {\em ACM Transactions on Information Systems (TOIS)}, 33(3):1--35, 2015.

\bibitem[ABG{\etalchar{+}}12]{AusielloBGLP12}
Giorgio Ausiello, Nicolas Boria, Aristotelis Giannakos, Giorgio Lucarelli, and Vangelis~Th. Paschos.
\newblock Online maximum k-coverage.
\newblock {\em Discrete Applied Mathematics}, 160(13-14):1901--1913, 2012.

\bibitem[ACG{\etalchar{+}}15]{AhnCGMW15}
Kook~Jin Ahn, Graham Cormode, Sudipto Guha, Andrew McGregor, and Anthony Wirth.
\newblock Correlation clustering in data streams.
\newblock In {\em {ICML}}, volume~37 of {\em {JMLR} Workshop and Conference Proceedings}, pages 2237--2246. JMLR.org, 2015.

\bibitem[AGM12a]{AhnGM12a}
Kook~Jin Ahn, Sudipto Guha, and Andrew McGregor.
\newblock Analyzing graph structure via linear measurements.
\newblock In {\em {SODA}}, pages 459--467. {SIAM}, 2012.

\bibitem[AGM12b]{AhnGM12b}
Kook~Jin Ahn, Sudipto Guha, and Andrew McGregor.
\newblock Graph sketches: sparsification, spanners, and subgraphs.
\newblock In {\em {PODS}}, pages 5--14. {ACM}, 2012.

\bibitem[AGM13]{AhnGM13}
Kook~Jin Ahn, Sudipto Guha, and Andrew McGregor.
\newblock Spectral sparsification in dynamic graph streams.
\newblock In {\em {APPROX-RANDOM}}, volume 8096 of {\em Lecture Notes in Computer Science}, pages 1--10. Springer, 2013.

\bibitem[AK18]{AssadiK18}
Sepehr Assadi and Sanjeev Khanna.
\newblock Tight bounds on the round complexity of the distributed maximum coverage problem.
\newblock In Artur Czumaj, editor, {\em Proceedings of the Twenty-Ninth Annual {ACM-SIAM} Symposium on Discrete Algorithms, {SODA} 2018, New Orleans, LA, USA, January 7-10, 2018}, pages 2412--2431. {SIAM}, 2018.

\bibitem[AKL16]{AssadiKL16}
Sepehr Assadi, Sanjeev Khanna, and Yang Li.
\newblock Tight bounds for single-pass streaming complexity of the set cover problem.
\newblock In {\em {STOC}}, pages 698--711. {ACM}, 2016.

\bibitem[AKLY16]{AssadiKLY16}
Sepehr Assadi, Sanjeev Khanna, Yang Li, and Grigory Yaroslavtsev.
\newblock Maximum matchings in dynamic graph streams and the simultaneous communication model.
\newblock In {\em {SODA}}, pages 1345--1364. {SIAM}, 2016.

\bibitem[Ass17]{assadi_tight_2017}
Sepehr Assadi.
\newblock Tight space-approximation tradeoff for the multi-pass streaming set cover problem.
\newblock In {\em Proceedings of the 36th ACM SIGMOD-SIGACT-SIGAI Symposium on Principles of Database Systems}, pages 321--335, 2017.

\bibitem[ASS19]{AgrawalSS19}
Shipra Agrawal, Mohammad Shadravan, and Cliff Stein.
\newblock Submodular secretary problem with shortlists.
\newblock In Avrim Blum, editor, {\em 10th Innovations in Theoretical Computer Science Conference, {ITCS} 2019, January 10-12, 2019, San Diego, California, {USA}}, volume 124 of {\em LIPIcs}, pages 1:1--1:19. Schloss Dagstuhl - Leibniz-Zentrum f{\"{u}}r Informatik, 2019.

\bibitem[BEM17]{BateniEM17}
MohammadHossein Bateni, Hossein Esfandiari, and Vahab~S. Mirrokni.
\newblock Almost optimal streaming algorithms for coverage problems.
\newblock In Christian Scheideler and Mohammad~Taghi Hajiaghayi, editors, {\em Proceedings of the 29th {ACM} Symposium on Parallelism in Algorithms and Architectures, {SPAA} 2017, Washington DC, USA, July 24-26, 2017}, pages 13--23. {ACM}, 2017.

\bibitem[BHNT15]{BhattacharyaHNT15}
Sayan Bhattacharya, Monika Henzinger, Danupon Nanongkai, and Charalampos~E. Tsourakakis.
\newblock Space- and time-efficient algorithm for maintaining dense subgraphs on one-pass dynamic streams.
\newblock In {\em {STOC}}, pages 173--182. {ACM}, 2015.

\bibitem[BMKK14]{badanidiyuru_streaming_2014}
Ashwinkumar Badanidiyuru, Baharan Mirzasoleiman, Amin Karbasi, and Andreas Krause.
\newblock Streaming submodular maximization: Massive data summarization on the fly.
\newblock In {\em Proceedings of the 20th ACM SIGKDD international conference on Knowledge discovery and data mining}, pages 671--680, 2014.

\bibitem[CCE{\etalchar{+}}16]{ChitnisCEHMMV16}
Rajesh Chitnis, Graham Cormode, Hossein Esfandiari, MohammadTaghi Hajiaghayi, Andrew McGregor, Morteza Monemizadeh, and Sofya Vorotnikova.
\newblock Kernelization via sampling with applications to finding matchings and related problems in dynamic graph streams.
\newblock In {\em {SODA}}, pages 1326--1344. {SIAM}, 2016.

\bibitem[CGQ15]{ChekuriGQ15}
Chandra Chekuri, Shalmoli Gupta, and Kent Quanrud.
\newblock Streaming algorithms for submodular function maximization.
\newblock In {\em {ICALP} {(1)}}, volume 9134 of {\em Lecture Notes in Computer Science}, pages 318--330. Springer, 2015.

\bibitem[CKW10]{CormodeKW10}
Graham Cormode, Howard~J. Karloff, and Anthony Wirth.
\newblock Set cover algorithms for very large datasets.
\newblock In {\em {CIKM}}, pages 479--488. {ACM}, 2010.

\bibitem[CW16]{ChakrabartiW16}
Amit Chakrabarti and Anthony Wirth.
\newblock Incidence geometries and the pass complexity of semi-streaming set cover.
\newblock In {\em {SODA}}, pages 1365--1373. {SIAM}, 2016.

\bibitem[DIMV14]{DemaineIMV14}
Erik~D. Demaine, Piotr Indyk, Sepideh Mahabadi, and Ali Vakilian.
\newblock On streaming and communication complexity of the set cover problem.
\newblock In Fabian Kuhn, editor, {\em Distributed Computing - 28th International Symposium, {DISC} 2014, Austin, TX, USA, October 12-15, 2014. Proceedings}, volume 8784 of {\em Lecture Notes in Computer Science}, pages 484--498. Springer, 2014.

\bibitem[ER14]{EmekR14}
Yuval Emek and Adi Ros{\'{e}}n.
\newblock Semi-streaming set cover - (extended abstract).
\newblock In {\em {ICALP} {(1)}}, volume 8572 of {\em Lecture Notes in Computer Science}, pages 453--464. Springer, 2014.

\bibitem[Fei98]{Feige98}
Uriel Feige.
\newblock A threshold of ln \emph{n} for approximating set cover.
\newblock {\em J. {ACM}}, 45(4):634--652, 1998.

\bibitem[FNFSZ20]{feldman_one-way_2020}
Moran Feldman, Ashkan Norouzi-Fard, Ola Svensson, and Rico Zenklusen.
\newblock The one-way communication complexity of submodular maximization with applications to streaming and robustness.
\newblock In {\em Proceedings of the 52nd Annual ACM SIGACT Symposium on Theory of Computing}, pages 1363--1374, 2020.

\bibitem[GM09]{guha_stream_2009}
Sudipto Guha and Andrew McGregor.
\newblock Stream order and order statistics: Quantile estimation in random-order streams.
\newblock {\em SIAM Journal on Computing}, 38(5):2044--2059, 2009.

\bibitem[GMT15]{GuhaMT15}
Sudipto Guha, Andrew McGregor, and David Tench.
\newblock Vertex and hyperedge connectivity in dynamic graph streams.
\newblock In {\em {PODS}}, pages 241--247. {ACM}, 2015.

\bibitem[HIMV16]{Har-PeledIMV16}
Sariel Har{-}Peled, Piotr Indyk, Sepideh Mahabadi, and Ali Vakilian.
\newblock Towards tight bounds for the streaming set cover problem.
\newblock In {\em {PODS}}, pages 371--383. {ACM}, 2016.

\bibitem[HP98]{hochbaum_analysis_1998}
Dorit~S Hochbaum and Anu Pathria.
\newblock Analysis of the greedy approach in problems of maximum k-coverage.
\newblock {\em Naval Research Logistics (NRL)}, 45(6):615--627, 1998.

\bibitem[IV19]{indyk_tight_2019}
Piotr Indyk and Ali Vakilian.
\newblock Tight trade-offs for the maximum $k$-coverage problem in the general streaming model.
\newblock In {\em 38th ACM SIGMOD-SIGACT-SIGAI Symposium on Principles of Database Systems}, pages 200--217, 2019.

\bibitem[JST11]{JowhariST11}
Hossein Jowhari, Mert Sa\u{g}lam, and G{\'{a}}bor Tardos.
\newblock Tight bounds for lp samplers, finding duplicates in streams, and related problems.
\newblock In Maurizio Lenzerini and Thomas Schwentick, editors, {\em Proceedings of the 30th {ACM} {SIGMOD-SIGACT-SIGART} Symposium on Principles of Database Systems, {PODS} 2011, June 12-16, 2011, Athens, Greece}, pages 49--58. {ACM}, 2011.

\bibitem[JWC23]{JaudWC23}
Stephen Jaud, Anthony Wirth, and Farhana Choudhury.
\newblock {Maximum Coverage in Sublinear Space, Faster}.
\newblock In {\em 21st International Symposium on Experimental Algorithms (SEA 2023)}, volume 265, pages 21:1--21:20, Dagstuhl, Germany, 2023.

\bibitem[Kar72]{karp1972reducibility}
Richard~M Karp.
\newblock Reducibility among combinatorial problems.
\newblock In {\em Complexity of Computer Computations}, pages 85--103. Springer, 1972.

\bibitem[KG07]{KrauseG07}
Andreas Krause and Carlos Guestrin.
\newblock Near-optimal observation selection using submodular functions.
\newblock In {\em {AAAI}}, pages 1650--1654. {AAAI} Press, 2007.

\bibitem[KG14]{krause_golovin_2014}
Andreas Krause and Daniel Golovin.
\newblock Submodular function maximization.
\newblock In Lucas Bordeaux, Youssef Hamadi, and Pushmeet Kohli, editors, {\em Tractability: Practical Approaches to Hard Problems}, pages 71--104. Cambridge University Press, 2014.

\bibitem[KKA23]{Khanna0A23}
Sanjeev Khanna, Christian Konrad, and Cezar{-}Mihail Alexandru.
\newblock Set cover in the one-pass edge-arrival streaming model.
\newblock In Floris Geerts, Hung~Q. Ngo, and Stavros Sintos, editors, {\em Proceedings of the 42nd {ACM} {SIGMOD-SIGACT-SIGAI} Symposium on Principles of Database Systems, {PODS} 2023, Seattle, WA, USA, June 18-23, 2023}, pages 127--139. {ACM}, 2023.

\bibitem[KKT15]{KempeKT15}
David Kempe, Jon~M. Kleinberg, and {\'{E}}va Tardos.
\newblock Maximizing the spread of influence through a social network.
\newblock {\em Theory of Computing}, 11:105--147, 2015.

\bibitem[KLM{\etalchar{+}}14]{KapralovLMMS14}
Michael Kapralov, Yin~Tat Lee, Cameron Musco, Christopher Musco, and Aaron Sidford.
\newblock Single pass spectral sparsification in dynamic streams.
\newblock In {\em {FOCS}}, pages 561--570. {IEEE} Computer Society, 2014.

\bibitem[KMM12]{konrad_maximum_2014}
Christian Konrad, Fr{\'e}d{\'e}ric Magniez, and Claire Mathieu.
\newblock Maximum matching in semi-streaming with few passes.
\newblock In {\em Approximation, Randomization, and Combinatorial Optimization: Algorithms and Techniques (APPROX/RANDOM)}, pages 231--242, 2012.

\bibitem[KNPW11]{kane2011fast}
Daniel~M Kane, Jelani Nelson, Ely Porat, and David~P Woodruff.
\newblock Fast moment estimation in data streams in optimal space.
\newblock In {\em Proceedings of the forty-third annual ACM symposium on Theory of computing}, pages 745--754, 2011.

\bibitem[Kon15]{Konrad15}
Christian Konrad.
\newblock Maximum matching in turnstile streams.
\newblock In {\em {ESA}}, volume 9294 of {\em Lecture Notes in Computer Science}, pages 840--852. Springer, 2015.

\bibitem[KW14]{KapralovW14}
Michael Kapralov and David~P. Woodruff.
\newblock Spanners and sparsifiers in dynamic streams.
\newblock In {\em {PODC}}, pages 272--281. {ACM}, 2014.

\bibitem[LMW14]{lim2014lazy}
Ching~Lih Lim, Alistair Moffat, and Anthony Wirth.
\newblock Lazy and eager approaches for the set cover problem.
\newblock In {\em Proceedings of the 37th Australasian Computer Science Conference}, pages 19--27, 2014.

\bibitem[LRVZ21a]{LiuRVZ21}
Paul Liu, Aviad Rubinstein, Jan Vondr{\'{a}}k, and Junyao Zhao.
\newblock Cardinality constrained submodular maximization for random streams.
\newblock In Marc'Aurelio Ranzato, Alina Beygelzimer, Yann~N. Dauphin, Percy Liang, and Jennifer~Wortman Vaughan, editors, {\em Advances in Neural Information Processing Systems 34: Annual Conference on Neural Information Processing Systems 2021, NeurIPS 2021, December 6-14, 2021, virtual}, pages 6491--6502, 2021.

\bibitem[LRVZ21b]{liu_cardinality_2021}
Paul Liu, Aviad Rubinstein, Jan Vondr{\'a}k, and Junyao Zhao.
\newblock Cardinality constrained submodular maximization for random streams.
\newblock In {\em Advances in Neural Information Processing Systems 34}, pages 6491--6502, 2021.

\bibitem[McG14]{McGregor14}
Andrew McGregor.
\newblock Graph stream algorithms: a survey.
\newblock {\em {SIGMOD} Record}, 43(1):9--20, 2014.

\bibitem[Mon20]{NEURIPS2020_6fbd841e}
Morteza Monemizadeh.
\newblock Dynamic submodular maximization.
\newblock In H.~Larochelle, M.~Ranzato, R.~Hadsell, M.F. Balcan, and H.~Lin, editors, {\em Advances in Neural Information Processing Systems}, volume~33, pages 9806--9817. Curran Associates, Inc., 2020.

\bibitem[MTV21]{McGregorTV21}
Andrew McGregor, David Tench, and Hoa~T. Vu.
\newblock Maximum coverage in the data stream model: Parameterized and generalized.
\newblock In Ke~Yi and Zhewei Wei, editors, {\em 24th International Conference on Database Theory, {ICDT} 2021, March 23-26, 2021, Nicosia, Cyprus}, volume 186 of {\em LIPIcs}, pages 12:1--12:20. Schloss Dagstuhl - Leibniz-Zentrum f{\"{u}}r Informatik, 2021.

\bibitem[MTVV15]{McGregorTVV15}
Andrew McGregor, David Tench, Sofya Vorotnikova, and Hoa~T. Vu.
\newblock Densest subgraph in dynamic graph streams.
\newblock In {\em {MFCS} {(2)}}, volume 9235 of {\em Lecture Notes in Computer Science}, pages 472--482. Springer, 2015.

\bibitem[MV19a]{McGregorV19}
Andrew McGregor and Hoa~T. Vu.
\newblock Better streaming algorithms for the maximum coverage problem.
\newblock {\em Theory Comput. Syst.}, 63(7):1595--1619, 2019.

\bibitem[MV19b]{mcgregor_better_2018}
Andrew McGregor and Hoa~T Vu.
\newblock Better streaming algorithms for the maximum coverage problem.
\newblock {\em Theory of Computing Systems}, 63:1595--1619, 2019.

\bibitem[MVV16]{McGregorVV16}
Andrew McGregor, Sofya Vorotnikova, and Hoa~T. Vu.
\newblock Better algorithms for counting triangles in data streams.
\newblock In {\em {PODS}}, pages 401--411. {ACM}, 2016.

\bibitem[NFTM{\etalchar{+}}18]{norouzi-fard_beyond_2018}
Ashkan Norouzi-Fard, Jakub Tarnawski, Slobodan Mitrovic, Amir Zandieh, Aidasadat Mousavifar, and Ola Svensson.
\newblock Beyond 1/2-approximation for submodular maximization on massive data streams.
\newblock In {\em Proceedings of the 35th International Conference on Machine Learning}, pages 3829--3838, 2018.

\bibitem[NTM{\etalchar{+}}18]{Norouzi-FardTMZ18}
Ashkan Norouzi{-}Fard, Jakub Tarnawski, Slobodan Mitrovic, Amir Zandieh, Aidasadat Mousavifar, and Ola Svensson.
\newblock Beyond 1/2-approximation for submodular maximization on massive data streams.
\newblock In Jennifer~G. Dy and Andreas Krause, editors, {\em Proceedings of the 35th International Conference on Machine Learning, {ICML} 2018, Stockholmsm{\"{a}}ssan, Stockholm, Sweden, July 10-15, 2018}, volume~80 of {\em Proceedings of Machine Learning Research}, pages 3826--3835. {PMLR}, 2018.

\bibitem[SG09]{SahaG09}
Barna Saha and Lise Getoor.
\newblock On maximum coverage in the streaming model {\&} application to multi-topic blog-watch.
\newblock In {\em {SDM}}, pages 697--708. {SIAM}, 2009.

\bibitem[vzGG99]{mda}
Joachim von~zur Gathen and J\"urgen Gerhard.
\newblock {\em Modern Computer Algebra}.
\newblock Cambridge University Press, 1999.

\bibitem[WC79]{WegmanC79}
Mark~N. Wegman and Larry Carter.
\newblock New classes and applications of hash functions.
\newblock In {\em Proc. 20th Annual IEEE Symposium on Foundations of Computer Science}, pages 175--182, 1979.

\bibitem[WCW23]{WarnekeCW23}
Rowan Warneke, Farhana~Murtaza Choudhury, and Anthony Wirth.
\newblock Maximum coverage in random-arrival streams.
\newblock In Inge~Li G{\o}rtz, Martin Farach{-}Colton, Simon~J. Puglisi, and Grzegorz Herman, editors, {\em 31st Annual European Symposium on Algorithms, {ESA} 2023, September 4-6, 2023, Amsterdam, The Netherlands}, volume 274 of {\em LIPIcs}, pages 102:1--102:15. Schloss Dagstuhl - Leibniz-Zentrum f{\"{u}}r Informatik, 2023.

\bibitem[YY13]{YuY13}
Huiwen Yu and Dayu Yuan.
\newblock Set coverage problems in a one-pass data stream.
\newblock In {\em {SDM}}, pages 758--766. {SIAM}, 2013.

\end{thebibliography}

\end{document}